\newcommand{\ket}[1]{\mbox{$ | #1 \rangle $}}
\newcommand{\bla}{\color{black}}
\newcommand{\be}{\begin{equation}}
\newcommand{\ee}{\end{equation}}
\newcommand{\ba}{\begin{eqnarray}}
\newcommand{\ea}{\end{eqnarray}}
\newenvironment{definition}[1][Definition.]{\begin{trivlist}
\item[\hskip \labelsep {\bfseries #1}]}{\end{trivlist}}
\let\csname equation*\endcsname\relax
\let\csname endequation*\endcsname\relax
\newtheorem{observation}{Observation}
\newtheorem{Theorem}{Theorem}
\begin{document}

\title[Operational nonclassicality of local multipartite correlations]
{Operational nonclassicality of local multipartite correlations 
in the limited-dimensional simulation scenario}

\author{C.  Jebaratnam} \ead{jebarathinam@bose.res.in}
\address{S. N. Bose National Centre for Basic Sciences, Salt Lake,
  Kolkata      700     098,      India}     \author{Debarshi      Das}
\ead{debarshidas@jcbose.ac.in} \address{Centre for Astroparticle
  Physics and Space Science (CAPSS),  Bose Institute, Block EN, Sector
  V,  Salt Lake,  Kolkata 700  091, India}  \author{Suchetana Goswami}
\address{S.   N.  Bose  National Centre  for Basic  Sciences, Salt
  Lake,    Kolkata    700    098,   India}   
\author{R.   Srikanth}  \ead{srik@poornaprajna.org}
\address{Poornaprajna Institute of  Scientific Research Bangalore-
  560    080,   Karnataka,    India}    \author{A.    S.     Majumdar}
\ead{archan@bose.res.in} \address{S. N. Bose National Centre for
  Basic Sciences, Salt Lake, Kolkata 700 098, India} 
\vspace{10pt}
\begin{indented}
\item[]February 2014
\end{indented}

\begin{abstract}
 For a  bipartite local  quantum correlation, superlocality  refers to
 the requirement for a larger dimension  of the random variable in the
 classical simulation  protocol than that  of the quantum  states that
 generate the correlations.   In this work, we  consider the classical
 simulation of  local tripartite quantum correlations  $P$ among three
 parties  $A,  B$ and  $C$.   If  at  least  one of  the  bipartitions
 $(A|BC)$, $(B|AC)$ and $(C|AB)$ is superlocal, then $P$ is said to be
 absolutely  superlocal,   whereas  if  all  three   bipartitions  are
 superlocal, then $P$ is said  to be genuinely superlocal.  We present
 specific   examples   of   genuine   superlocality   for   tripartite
 correlations  derived from  three-qubit  states.  It  is argued  that
 genuine quantumness as  captured by the notion of  genuine discord is
 necessary  for demonstrating  genuine superlocality.     Finally,
   the notions of absolute and  genuine superlocality are also defined
   for multipartite correlations.
\end{abstract}

%
%
%
%
%

\section{Introduction}

Quantum  composite  systems  exhibit  nonclassical  features  such  as
entanglement \cite{Horodecki} and  Bell nonlocality \cite{BCP+14}.  In
the  Bell   scenario,  a   correlation  arising  from   local  quantum
measurements  on  a composite  system  is  nonlocal  if it  cannot  be
simulated by sharing classical randomness,  i.e., if it does not admit
a local  hidden variable (LHV)  model \cite{Bel64}.  It  is well-known
that  quantum  mechanics  is  not  maximally  nonlocal  as  there  are
nonsignaling  (NS)  correlations,   e.g.,  Popescu-Rohrlich  (PR)  box
\cite{PR94},  which are  more nonlocal  than that  allowed by  quantum
theory.  In  the framework of generalized  nonsignaling theory (GNST),
there have been  various attempts to find out  the minimal constraints
of  quantum theory  \cite{Pop14}. GNST  has  also been  used in  other
contexts  such as  quantum cryptography  \cite{AGM06} and  quantifying
nonlocality \cite{EPR2B}.

Recently,  it  has   been  shown  that  even   some  separable  states
demonstrate  advantage in  certain quantum  information tasks  if they
have  quantum  discord  which  is a  generalized  measure  of  quantum
correlations \cite{OZ01}.  This initiated the study of nonclassicality
going beyond nonlocality \cite{Modietal}.  In Ref. \cite{BP14}, it has
been shown  that certain separable  states which have  quantumness may
improve  certain  information  theoretic   protocols  if  there  is  a
constraint on the possible local  hidden variable models, i.e., shared
randomness between the parties is  limited to be finite. This provides
a  way  for  obtaining  an  operational meaning  of  the  measures  of
quantumness such as quantum discord .

While  entanglement is  necessary for  nonlocality, not  all entangled
states  can be  used to  demonstrate nonlocality.   In the  context of
classical  simulation   of  local   entangled  states,   Bowles  \etal
\cite{BHQ+15} have defined a measure which is the minimal dimension of
shared classical randomness that is needed to reproduce the statistics
of a  given local entangled  state.  Unlike the previous  works, which
used unbounded shared  randomness to simulate a  given local entangled
state, Bowles \etal have shown that  all local entangled states can be
simulated by using only finite shared randomness.

In Ref. \cite{DW15}, Donohue and Wolfe (DW) have provided upper bounds
on the minimal  dimension of the shared  classical randomness required
to simulate any local correlation in  a given Bell scenario.  
DW  have demonstrated  an interesting  feature of  certain local
boxes  called superlocality:  there  exist local  boxes  which can  be
simulated by  certain quantum  systems of  local dimension  lower than
that of the shared classical randomness needed to simulate them.  That
is, superlocality  refers to  the dimensional advantage  in simulating
certain  local boxes  by  using quantum  systems.   In particular,  DW
\cite{DW15}  and   also  Goh   \etal  \cite{GBS16}  have   shown  that
entanglement enables superlocality, however, superlocality occurs even
for separable states.

In  Ref.   \cite{JAS16}, Jebaratnam \etal have pointed  out  that superlocality  
cannot   occur  for  arbitrary  separable   states.  In particular, 
Jebaratnam \etal have argued that  separable states which  are a
classical-quantum state \cite{PHH08} or its permutation can never lead
to superlocality.  Note that bipartite  quantum states which are not a
classical-quantum state must have quantumness as quantified by quantum
discord  \cite{OZ01}.  The  observation that  the limited  dimensional
quantum simulation of certain  local correlations requires quantumness
in the states motivated the study of nonclassicality going beyond Bell
nonlocality \cite{Jeba}.  In Ref.  \cite{JAS16}, Jebaratnam \etal
have considered a
measure of nonclassicality in the context of GNST called Bell strength
for  a  family  of  local correlations  and  identified  nonzero  Bell
strength as a quantification of superlocality  as well. 

The extension of  the Bell-type scenario to more than  two parties was
first elaborated  by Greenberger, Horne, and  Zeilinger \cite{GHZ}. In
Refs.  \cite{SI,mermin,PM12,DPM13},  certain interesting  features  of
nonlocality of  tripartite systems  over and above  what is  found for
bipartite ones  have been  established. Recently,  genuine multipartite
quantum discord  has been defined  to quantify the  quantumness shared
among  all subsystems of  the multipartite quantum  state
\cite{GTC,MZ12,GCTS,GTQD}.    In   Ref.    \cite{Jeb17},   Jebaratnam   has
demonstrated  that  the  limited  dimensional  quantum  simulation  of
certain local tripartite correlations  must require genuine tripartite
quantum  discord states.   To study  genuine nonclassicality  of these
correlations,  Jebaratnam   has  introduced  two   quantities  called,
Svetlichny strength and Mermin strength,  in the context of tripartite
nonsignaling boxes (i.e., correlations in the tripartite GNST).

For the case  of multipartite systems, early  studies have established
that certain nonlocal measures may indeed be amplified by the addition
of   system  dimensions   \cite{Mer90,RS91,HM95,Cab02}.   Multipartite
quantum entanglement,  however, displays complicated  structures which
can be broadly classified according  to whether entanglement is shared
among all  the subsystems of a  given multipartite system or  not.  In
Ref.   \cite{SI},   Svetlichny  introduced   the  notion   of  genuine
multipartite nonlocality and derived  Bell-type inequalities to detect
it.  Quite  generally, \textit{genuine} multipartite nonlocalty  for a
multipartite  system  occurs  precisely  when  every  bi-partition  is
nonlocal.    By  contrast,  \textit{absolute} nonlocality  occurs
  when at least one  bi-partition is nonlocal.   Absolute multipartite
nonlocality  is indicated  by  the violation  of  a Mermin  inequality
\cite{mermin}.

In the  present work,  we are interested  in providing  an operational
characterization   of  genuine   and   absolute  nonclassicality   via
superlocality in  the scenario of  local tripartite systems  and
  its generalization  to higher-particle systems.  It  is argued that
genuine  quantum  discord  is   necessary  for  demonstrating  genuine
superlocality.  This  implies that  genuine superlocality  provides an
operational characterization of genuine  quantum discord.  The concept
of superlocality  is generalized  for multipartite  correlations also.
As specific  examples, we  consider tripartite correlations  that have
non-vanishing Svetlichny strength or Mermin strength \cite{Jeb17}.

The organization of  the paper is as follows.  In  Sec.  \ref{prl}, we
present the mathematical tool that we  use for the purpose of studying
superlocality  of  local  tripartite   boxes,  namely  a  polytope  of
tripartite             nonsignaling             boxes             with
two-binary-inputs-two-binary-outputs.    In  Sec.    \ref{gslsec},  we
define absolute and genuine superlocality  for tripartite boxes and we
present examples of two families  of tripartite local correlations that are
absolutely   superlocal   or   genuinely  superlocal,   having   their
nonclassicality  quantified  in  terms of  nonzero  \textit{Svetlichny
  strength}  and nonzero  \textit{Mermin strength},  respectively.  In
Sec.   \ref{gslgqd}, we  demonstrate  the  connection between  genuine
superlocality and  genuine quantum  discord.  In Sec.   \ref{gmnl}, we
generalize the  notion of  superlocality to  $n$-partite correlations.
Sec.  \ref{conc} is reserved for certain concluding remarks.

\section{Preliminaries \label{prl}}

We are interested  in quantum correlations arising  from the following
tripartite  Bell scenario.   Three  spatially  separated parties,  say
Alice, Bob,  and Charlie, perform  three dichotomic measurements  on a
shared              tripartite              quantum              state
$\rho\in\mathcal{H}_A \otimes \mathcal{H}_B\otimes\mathcal{H}_C$,  where
$\mathcal{H}_K$  denotes  Hilbert  space  of  $k$th  party.   In  this
scenario, a correlation  between the outcomes is described  by the set
of conditional probability distributions $P(abc|xyz)$, where $x$, $y$,
and $z$ denote  the inputs (measurement choices) and $a$,  $b$ and $c$
denote the  outputs (measurement outcomes)  of Alice, Bob  and Charlie
respectively (with $x,y,z,a,b,c\in  \{0,1\}$).  Suppose $M^{a}_{A_x}$,
$M^{b}_{B_y}$ and  $M^{c}_{C_z}$ denote  the measurement  operators of
Alice, Bob,  and Charlie, respectively.   Then the correlation  can be
expressed    through    the    Born's    rule    as    follows:    \be
P(abc|xyz)=\mathrm{Tr}          \left(\rho          M^{a}_{A_x}\otimes
M^{b}_{B_y}\otimes M^{c}_{C_z}\right).  \ee In particular, we focus on
the scenario where  the parties share a three-qubit  state and perform
spin     projective     measurements     $A_x=\hat{a}_x.\vec{\sigma}$,
$B_y=\hat{b}_y.\vec{\sigma}$,  and $C_z=\hat{c}_z.\vec{\sigma}$.  Here
$\hat{a}_x$,  $\hat{b}_y$,  and  $\hat{c}_z$ are  unit  Bloch  vectors
denoting         the         measurement        directions         and
$\vec{\sigma}=\{\sigma_1,\sigma_2,\sigma_3\}$,                    with
$\{\sigma_i\}_{i=1,2,3}$ being the Pauli matrices.

The set of nonsignaling (NS) boxes with two binary inputs and two binary outputs
forms a convex polytope $\mathcal{N}$ in a $26$ dimensional space \cite{BLM+05} 
and includes the set of quantum correlations $Q$ as a proper subset. 
Any box belonging to this polytope can be fully specified by $6$ single-party, 
$12$ two-party and $8$ three-party expectations,
\begin{align}
&P(abc|xyz)=\frac{1}{8}[1+(-1)^a\braket{A_x}+(-1)^b\braket{B_y}+(-1)^c\braket{C_z}\nonumber \\
&+(-1)^{a\oplus b}\braket{A_xB_y}+(-1)^{a\oplus c}\braket{A_xC_z}+(-1)^{b\oplus c}\braket{B_yC_z}\nonumber \\
&+(-1)^{a\oplus b\oplus c}\braket{A_xB_yC_z}],
\end{align}
where $\braket{A_x}=\sum_a (-1)^a P(a|x)$, $\braket{A_xB_y}=\sum_{a,b}(-1)^{a\oplus b}P(ab|xy)$ 
and $\braket{A_xB_yC_z}=\sum_{a,b,c}(-1)^{a\oplus b \oplus c}P(abc|xyz)$, $\oplus$ denotes modulo sum $2$.
The set of boxes that can be simulated by a fully LHV model are of the form,
\begin{align}
P(abc|xyz)=\sum^{d_\lambda-1}_{\lambda=0} p_\lambda P_\lambda(a|x)P_\lambda(b|y)P_\lambda(c|z), \label{LHV}
\end{align}
with $\sum^{d_\lambda-1}_{\lambda=0} p_\lambda=1$ and it forms the fully local (or, $3$-local) 
polytope \cite{Fin82,WernerWolfmulti} denoted by $\mathcal{L}$. 
Here $\lambda$ denotes shared classical randomness which occurs with probability $p_\lambda$. 
For a given fully local box, the form (\ref{LHV}) determines a 
classical simulation protocol with dimension $d_\lambda$ \cite{DW15}.
The extremal boxes of $\mathcal{L}$ are $64$ local vertices which are fully deterministic boxes,
\be
P^{\alpha\beta\gamma\epsilon\zeta\eta}_D(abc|xyz)=\left\{
\begin{array}{lr}
1, & a=\alpha x\oplus \beta\\
   & b=\gamma y\oplus \epsilon \\
   & c=\zeta  z\oplus \eta\\
0 , & \text{otherwise}.\\
\end{array}
\right.  \label{DB} \ee
Here, $\alpha,\beta,\gamma, \epsilon, \zeta, \eta \in\{0,1\}$. The above boxes can be written as the 
product of deterministic distributions corresponding to Alice and Bob-Charlie,  i.e.,
$P^{\alpha\beta\gamma\epsilon\zeta\eta}_D(abc|xyz)=P^{\alpha\beta}_D(a|x)P^{\gamma\epsilon\zeta\eta}_D(bc|yz)$,
here 

\be
\label{alicedet}
P^{\alpha\beta}_D(a|x)=\left\{
\begin{array}{lr}
1, & a=\alpha x\oplus \beta\\
0 , & \text{otherwise}\\
\end{array}
\right.   \ee
and
\be
P^{\gamma\epsilon\zeta\eta}_D(bc|yz)=\left\{
\begin{array}{lr}
1, & b=\gamma y\oplus \epsilon \\
   & c=\zeta  z\oplus \eta\\
0 , & \text{otherwise},\\
\end{array}
\right.   \ee
which can also be written as the product deterministic distributions corresponding to Bob and Charlie, 
i.e., $P^{\gamma\epsilon\zeta\eta}_D(bc|yz)=P^{\gamma\epsilon}_D(b|y)P^{\zeta\eta}_D(c|z)$,
where
\be
P^{\gamma\epsilon}_D(b|y)=\left\{
\begin{array}{lr}
1, & b=\gamma y\oplus \epsilon\\
0 , & \text{otherwise}\\
\end{array}
\right.   \ee
and 
\be
P^{\zeta\eta}_D(c|z)=\left\{
\begin{array}{lr}
1, & c=\zeta  z\oplus \eta\\
0 , & \text{otherwise}.\\
\end{array}
\right.  \ee  Note that the set  of $3$-local boxes and  quantum boxes
satisfy  $\mathcal{L} \subset  Q  \subset  \mathcal{N}$.  Boxes  lying
outside  $\mathcal{L}$ are  called \textit{absolutely}  nonlocal boxes
and  they  cannot  be  written  as  a  convex  mixture  of  the  local
deterministic boxes alone.

A tripartite  correlation is  said to  be ``2-local  across the
  bipartite cut  $(AB|C)$'' if it has the following form:
  \begin{align}
P(abc|xyz)&=\sum_\lambda p_\lambda P_\lambda(ab|xy)\,P_\lambda(c|z),
\label{HLNLbi}
\end{align}
where $P_\lambda(ab|xy)$ can have  arbitrary nonlocality  consistent with  the NS
principle.  ``2-locality across other bipartite cuts" for tripartite correlations 
can be  defined similarly. Note that any $3$-local box can always be written in 
$2$-local form across any possible bipartition. The general  \textit{2-local} form
\cite{Banceletal} is, therefore:
\begin{align}
P(abc|xyz)&=s_1\sum_\lambda p_\lambda P_\lambda^{AB|C}+s_2\sum_\lambda q_\lambda P_\lambda^{AC|B}\nonumber \\
&+s_3\sum_\lambda r_\lambda P_\lambda^{A|BC}, \label{HLNL}
\end{align}
where $P_\lambda^{AB|C}=P_\lambda(ab|xy)\,P_\lambda(c|z)$,  and, where
$P_\lambda^{AC|B}$ and $P_\lambda^{A|BC}$  are similarly defined. Here
$s_1  + s_2  +  s_3 =1$;  $\sum_\lambda  p_\lambda =1$;  $\sum_\lambda
q_\lambda  =1$;  $\sum_\lambda  r_\lambda  =1$.   
 Each bipartite  distribution  in the  decomposition
(\ref{HLNL})  can have  arbitrary nonlocality  consistent with  the NS
principle.
A  tripartite
  nonlocal  box is  \textit{genuinely  tripartite nonlocal}  if it 
  cannot  be written  in  the $2$-local  form given by Eq.
  (\ref{HLNL}). Hence, a genuinely tripartite nonlocal box is nonlocal
  with respect to every bipartition $(A|BC)$, $(B|CA)$, $(C|AB)$.

The set  of boxes that admit  a decomposition as in  Eq.  (\ref{HLNL})
again  forms  a  convex   polytope  which  is  called  \textit{2-local
  polytope}, denoted  by $\mathcal{L}_2$.  The extremal  boxes of this
polytope  are the  $64$ local  vertices and  $48$ $2$-local  vertices.
There are  $16$ $2$-local  vertices in which  a PR-box  \cite{PR94} is
shared between $A$ and $B$,
\begin{align}
&P^{\alpha\beta\gamma\epsilon}_{12}(abc|xyz)\nonumber \\
&=\left\{
\begin{array}{lr}
\frac{1}{2}, & a\oplus b=x\cdot y \oplus \alpha x\oplus \beta y \oplus \gamma \quad \& \quad c=\gamma z \oplus \epsilon\\
0 , & \text{otherwise},\\
\end{array}
\right. \label{PR}
\end{align}
and      the     other      $32$      two-way     local      vertices,
$P^{\alpha\beta\gamma\epsilon}_{13}$                               and
$P^{\alpha\beta\gamma\epsilon}_{23}$, in  which a PR-box is  shared by
$AC$  and $BC$,  are similarly  defined.   The extremal  boxes in  Eq.
(\ref{PR})    can    be    written    in    the    factorized    form,
$P^{\alpha\beta\gamma\epsilon}_{12}             (abc|xyz)            =
P^{\alpha\beta\gamma}_{PR} (ab|xy) P^{\gamma\epsilon}_D(c|z)$, where
\begin{align}
&P_{PR}^{\alpha \beta \gamma} (ab|x y)\nonumber \\
&=\left\{
\begin{array}{lr}
\frac{1}{2}, & a \oplus b = x.y \oplus \alpha x \oplus \beta y \oplus \gamma\\
0 , & \text{otherwise},\\
\end{array}
\right. \label{PR2}
\end{align}
and  $P^{\gamma\epsilon}_D(c|z)$ is  the aforementioned  deterministic
box.   The  set  of  $2$-local  boxes  satisfy,  $\mathcal{L}  \subset
\mathcal{L}_2  \subset  \mathcal{N}$.   Absolute  tripartite  nonlocal
boxes  can be  either  2-local or  genuinely  tripartite nonlocal.   A
genuinely  tripartite  nonlocal box  cannot  be  written as  a  convex
mixture  of the  extremal boxes  of $\mathcal{L}_{2}$  and violates  a
facet inequality of $\mathcal{L}_2$ given in Ref. \cite{Banceletal}.

The  Svetlichny   inequalities  \cite{SI}  which  are   given  by  \be
\mathcal{S}_{\alpha\beta\gamma\epsilon}    =\sum_{xyz}(-1)^{x\cdot   y
  \oplus x\cdot z \oplus y\cdot z \oplus \alpha x\oplus \beta y \oplus
  \gamma z \oplus \epsilon}\braket{A_xB_yC_z}\le4, \label{SI} \ee 
are one of the classes of  facet inequalities of the 2-local polytope.
The violation of  a Svetlichny inequality implies one of  the forms of
genuine tripartite nonlocality \cite{Banceletal}. 
The following extremal tripartite
nonlocal boxes:
\begin{align}
&P^{\alpha\beta\gamma\epsilon}_{\rm Sv}(abc|xyz) \nonumber \\
&=\left\{
\begin{array}{lr}
\frac{1}{4}, & \!a\!\oplus \!b\!\oplus \!c\!
=\!x\cdot y \!\oplus \!x\cdot z\! \oplus \!y\cdot z \!\oplus \!\alpha x\!\oplus\! \beta y\! \oplus\! \gamma z \!\oplus\! \epsilon\\
0 , & \text{otherwise},\\
\end{array}
\right. \label{NLV}
\end{align}
which violate a Svetlichny inequality to its algebraic maximum are called Svetlichny boxes.

Mermin  inequalities \cite{mermin}  are one  of the  classes of  facet
inequalities of  the fully  local (or, $3$-local) polytope \cite{PS01,Sli03}.   One of
the      Mermin      inequalities      is      given      by,      \be
\braket{A_0B_0C_0}-\braket{A_0B_1C_1}-\braket{A_1B_0C_1}-\braket{A_1B_1C_0}\le2, \label{MI}
\ee and  the other $15$ Mermin  inequalities can be obtained  from the
above inequality by local reversible operations which are analogous to
local  unitary   operations  in  quantum  theory   and  include  local
relabeling of the inputs  and outputs. All the $16$ Mermin inequalities
are given by 
\be \label{LROMI}
\mathcal{M}_{\alpha\beta\gamma\epsilon}=
(\alpha\oplus\beta\oplus\gamma\oplus1)\mathcal{M}^+_{\alpha\beta\gamma\epsilon}+(\alpha\oplus\beta\oplus\gamma)\mathcal{M}^-_{\alpha\beta\gamma\epsilon}\le2, 
\ee
where $\mathcal{M}^+_{\alpha\beta\gamma\epsilon}:=(-1)^{\gamma\oplus\epsilon}\braket{A_0B_0C_1}
+(-1)^{\beta\oplus\epsilon}\braket{A_0B_1C_0}+(-1)^{\alpha\oplus\epsilon}\braket{A_1B_0C_0}
+(-1)^{\alpha\oplus\beta\oplus\gamma\oplus\epsilon\oplus1}\braket{A_1B_1C_1}$ and\\ $\mathcal{M}^-_{\alpha\beta\gamma\epsilon}:=(-1)^{\alpha\oplus\beta\oplus\epsilon\oplus 1}
\braket{A_1B_1C_0}+(-1)^{\alpha\oplus\gamma\oplus\epsilon\oplus 1}\braket{A_1B_0C_1}
+(-1)^{\beta\oplus\gamma\oplus\epsilon\oplus 1}\braket{A_0B_1C_1}+(-1)^{\epsilon}\braket{A_0B_0C_0}$. 
Mermin inequalities detect
absolute  nonlocality, i.e.,  it  guarantees the  box  to lie  outside
$\mathcal{L}$\bla.   Quantum   correlations  that  violate   a  Mermin
inequality      to     its      algebraic     maximum      demonstrate
Greenberger--Horne--Zeilinger (GHZ) paradox  \cite{GHZ} and are called
Mermin boxes.

In  the  tripartite case,  while  the  dimension  of the  NS  polytope
$\mathcal{N}$  is  26,   the  number  of  extreme   boxes  are  53,856
\cite{PBS11}.  Thus, a given $P$, expressed as a convex combination of
the extreme  boxes, does not  have a unique decomposition.   The
  correlations  $P$ of  interest to  us,  belong to  a subpolytope  of
  $\mathcal{N}$, called Svetlichny-box polytope in Ref.  \cite{Jeb17},
  denoted  $\mathcal{R}$,  having $128$  extreme  boxes  (64 3-local
  boxes,  48  2-local  boxes  that are not  3-local,  and  finally,  16
  Svetlichny boxes).  The Svetlichny-box polytope  can be seen as
  a generalization of the bipartite  PR-box polytope. 

  Even in  the case of the polytope $\mathcal{R}$,  since it has a
  more extreme  boxes than its  dimension of  26, therefore it  has no
  unique  decomposition.   However,    a  unique,  \textit{canonical},
decomposition for any $P\in\mathcal{R}$ can be given, as shown in Ref.
\cite{Jeb17},  by making  use of  the non-trivial  symmetry properties
among the extreme  boxes of $\mathcal{R}$.  Thereby,  any nonlocal box
$P\in\mathcal{R}$ violating a Svetlichny  inequality can be brought to
the form \cite{Jeb17}:
\begin{equation}
P=p_{Sv}P^{\alpha\beta\gamma\epsilon}_{Sv}+(1-p_{Sv})P_{SvL},  
\label{eq:canonical}
\end{equation} 
where  a single  Svetlichny-box, $P^{\alpha\beta\gamma\epsilon}_{Sv}$,
is  dominant  and $p_{Sv}$  has  been  maximized  and $P_{SvL}$  is  a
Svetlichny-local box.  

Eq. (\ref{eq:canonical}) is called the canonical decomposition for any
box  $P$   belonging  to   the  Svetlichny-box   polytope.   Following
Ref. \cite{Jeb17}, we refer to $p_{Sv}$ in the canonical decomposition
as  Svetlichny   strength.      In  a   given  situation,  suppose
  $\textbf{P}^\uparrow_{\rm Sv}$ denotes  the dominant Svetlichny-box.
  The canonical form (\ref{eq:canonical}) becomes
\begin{equation}    P=\mu
  \textbf{P}^\uparrow_{\rm
    Sv}+(1-\mu)P^{\mathcal{G}=0}_{SvL}.   \label{eq:psvl}  
\end{equation}   
The   fact   that   $\mu$   represents   the   maximal   fraction   of
$\textbf{P}^\uparrow_{\rm  Sv}$   over  all  decompositions   has  the
following   consequence.    The    quantity   $\mathcal{G}$   in   Eq.
(\ref{eq:psvl}) is given by
\begin{equation}
\mathcal{G}:=\min\{\mathcal{G}_1,...,\mathcal{G}_9\},  \label{GBD} 
\end{equation}
where
$\mathcal{G}_1:=\Big|\Big||\mathcal{S}_{000}-\mathcal{S}_{001}|-|\mathcal{S}_{010}-\mathcal{S}_{011}|\Big|
-\Big||\mathcal{S}_{100}-\mathcal{S}_{101}|-|\mathcal{S}_{110}-\mathcal{S}_{111}|\Big|\Big|$
and the other  eight $\mathcal{G}_i$ are similarly defined  and can be
obtained   by  interchanging   $\mathcal{S}_{\alpha\beta\gamma}$'s  in
$\mathcal{G}_1$.     The  motivation  for this  quantity  is  the
  ``monoandrous'' nature of Svetlichny boxes, whereby they are maximal
  for precisely one of the Svetlichny inequalities. On the other hand,
  the deterministic boxes have the same absolute values.  As a result,
  $\mathcal{G}$  can  be  used  to witness  the  canonicality  of  the
  decomposition, by  checking that the local  part has $\mathcal{G}=0$
  \cite{Jeb17}.

The  Svetlichny strength $\mu$  of the box  given by
Eq. (\ref{eq:psvl})  satisfies the  relationship $\mathcal{G}(P)=8\mu$
as shown  in Ref.  \cite{Jeb17}. The  box $P^{\mathcal{G}=0}_{SvL}$ in
Eq.(\ref{eq:psvl}) is a Svetlichny-local  box having $\mathcal{G}(P) =
0$.    Thus,  the   canonical  decomposition  (\ref{eq:psvl})  is
  irreducible in the sense that  the full weight of genuine tripartite
  nonlocality  has been  transferred to  the $\textbf{P}^\uparrow_{\rm
    Sv}$ part and  no further reduction of this weight  is possible in
  the local part.

Analogously, it can  be shown (see Section  \ref{sec:mermin}) that the
canonical decomposition  of $P_{SvL}$ in Eq.   (\ref{eq:psvl}) wherein
$p_{Sv}=0$ is one  which is a convex combination of  a dominant Mermin
box (an equal mixture of two  Svetlichny boxes) and a Mermin-local box
(one that doesn't violate the Mermin inequality), such that the weight
of   the  dominant   Mermin   box  has   been  maximized.    Following
Ref. \cite{Jeb17}, we define this  maximized weight as Mermin strength
of the correlation.  

 Accordingly,  the Svetlichny-local part in  Eq. (\ref{eq:psvl}),
  which is multiply decomposable, can itself be canonically decomposed
  such   that   the    weight   of   the   Mermin    part   has   been
  maximized.  Therefore, the  boxes  that we  consider  in this  work
belong to the family of boxes which have the following canonical form:
\begin{equation}
P=\mu \textbf{P}^\uparrow_{\rm Sv}+\nu \textbf{P}^\uparrow_{\rm M}+(1-\mu-\nu)P^{\mathcal{G}=\mathcal{Q}=0}_{SvL}, \label{3dfact1}
\end{equation}
where $\textbf{P}^\uparrow_{\rm  M}$ is the dominant  Mermin-box.  The
quantity $\mathcal{Q}$  in Eq. (\ref{3dfact1}),   which witnesses
  the  canonicality  of  the  Svetlichny-local  decomposition  into  a
  Mermin-nonlocal  and Mermin-local  part,  is given  by \be  Q:=\min
\{Q_1,...,Q_9\},
\end{equation}
where
$Q_1=|||\mathcal{M}_{000}-\mathcal{M}_{001}|-|\mathcal{M}_{010}-\mathcal{M}_{011}||-||\mathcal{M}_{100}-\mathcal{M}_{101}|
-|\mathcal{M}_{110}-\mathcal{M}_{111}|||$,   and  other   $Q_i$'s  are
obtained by permutations  (Here, $\mathcal{M}_{\alpha\beta\gamma}$ are
given by  Eq. (\ref{LROMI})).   The Mermin strength  $\nu$ of  the box
given    by   Eq.    (\ref{3dfact1})   satisfies    the   relationship
$\mathcal{Q}(P)=4\nu$  as   shown  in   Ref.  \cite{Jeb17}.   The  box
$P^{\mathcal{G}=\mathcal{Q}=0}_{SvL}$ in  Eq.(\ref{3dfact1}) is  a box
having $\mathcal{G}(P) = 0$ and $Q(P) = 0$.
Unlike  the nonlocal cost \cite{EPR2B}, Svetlichny
strength and/or Mermin strength can  also be nonzero for certain fully
local as well as 2-local correlations.

In the  following, we  study the non-classicality  as captured  by the
notion of  superlocality of  two families of  fully local  and $2$-local
tripartite  correlations which  have  Svetlichny  strength and  Mermin
strength, respectively.

\section{Superlocality of tripartite fully local and $2$-local boxes}\label{gslsec}

For a given local bipartite or $n$-partite box,
let $d_\lambda$ denote the minimal dimension of the shared classical randomness. 
In Ref. \cite{DW15}, DW have derived the upper on $d_\lambda$
with the assumption that $n-1$ of the parties' measurements depend
deterministically on $\lambda$, whereas the other party can use nondeterministic
strategy on each $\lambda$ in the classical simulation model.  
Before we define superlocality for fully local tripartite boxes,
let us define superlocality for local bipartite boxes.
\begin{definition}
Suppose       we       have       a       quantum       state       in
$\mathbb{C}^{d^A}\otimes\mathbb{C}^{d^B}$   and   measurements   which
produce  a local  bipartite box  $P(ab|xy)$.  The  correlation $P$  is
\textit{sublocal} precisely if
\begin{equation}  P(ab|xy)=\sum^{d_\lambda-1}_{\lambda=0}
p_\lambda  P_\lambda(a|x)P_\lambda(b|y),  
\end{equation} 
with $d_\lambda\le d$,  with $d=\min\{d_A,d_B\}$.  Then, superlocality
holds iff there is no sublocal  decomposition of the given box.  Here,  
$\sum^{d_\lambda-1}_{\lambda=0}   p_\lambda =1$.\hfill $\blacksquare$
\end{definition}
Note that, suppose  the given local box  is produced by a  quantum state in
$\mathbb{C}^{d_A}\otimes\mathbb{C}^{d_B}$.   Then   it  can   also  be
produced by a quantum  state in $\mathbb{C}^d\otimes\mathbb{C}^d$ with
$d=\min\{d_A,d_B\}$ \cite{DW15}.

  Note that a bipartite correlation that is nonlocal is obviously
  non-sublocal  because  there  is no  dimensionally  bounded  preshared
  randomness  that   can  reproduce  it  in   a  classical  simulation
  protocol.  The interesting  cases  where  superlocality can  witness
  quantumness will then  pertain only to local  correlations. Hence in
  this  work,  we shall  be  mainly  concerned with  superlocality  as
  applied to local correlations.

 We can now extend superlocality to a tripartite system.
\begin{definition}
Suppose  we   have  a  quantum  state   in  $\mathbb{C}^{d^A}  \otimes
\mathbb{C}^{d^B}  \otimes  \mathbb{C}^{d^C}$  and  measurements  which
produce a fully local tripartite box $P(abc|xyz)$.  The  correlation $P$ is
\textit{3-sublocal} (or,  \textit{fully sublocal}) precisely if
\begin{equation}  
P(abc|xyz)= \sum^{d_\lambda-1}_{\lambda=0}
p_\lambda  P_\lambda(a|x)P_\lambda(b|y)P_\lambda(c|z),  
\label{eq:3floc}
\end{equation} 
with $d_\lambda\le d$, with  $d=\min\{d_A,d_B, d_C\}$.  Then, absolute
tripartite   superlocality   holds   iff  there   is   no   3-sublocal
decomposition  of the  given fully local box.   Here,
$\sum^{d_\lambda-1}_{\lambda=0} p_\lambda = 1$.\hfill $\blacksquare$
\end{definition}

In other words, if a fully local correlation $P(abc|xyz)$ is such that
at least  one of the  bipartitions $(A|BC)$, $(B|AC)$ and  $(C|AB)$ is
superlocal,  then $P(abc|xyz)$  is  said to  have absolute  tripartite
superlocality.  The set of all fully sublocal tripartite boxes forms a
nonconvex  subset of  the fully  local polytope  $\mathcal{L}$ and  is
denoted by $ \mathcal{L}^\ast_3 \equiv \mathcal{L}^\ast$.

\begin{figure}
\centering
\includegraphics[width=0.45\textwidth]{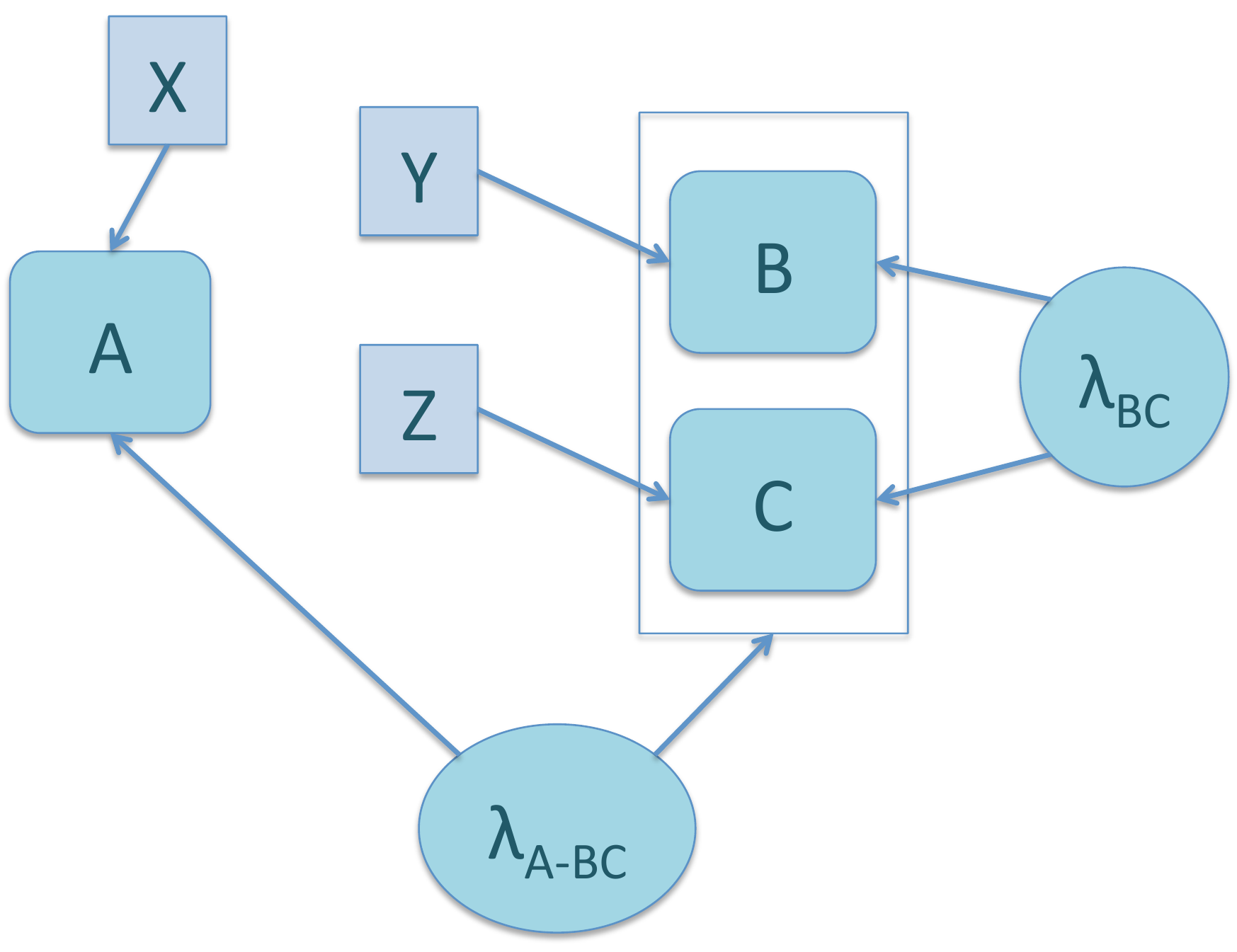}
\caption{\textbf{A   2-sublocal   box:}  Directional   acyclic   graph
  illustrating simulation of  a local tripartite box  in the following
  classical  protocol: Alice  shares hidden  variable $\lambda_{A-BC}$
  with Bob-Charlie.   For  each $\lambda_{A-BC}$, Bob  and Charlie
  can    use   shared    randomness   $\lambda_{BC}$    of   arbitrary
  dimension.}\label{gsl}
\end{figure}

We  now  define the concept of \textit{2-sublocality} across a given bipartite cut.
\begin{definition}
Suppose  we   have  a  quantum  state   in  $\mathbb{C}^{d^A}  \otimes
\mathbb{C}^{d^B} \otimes \mathbb{C}^{d^C}$  and measurements producing
a tripartite box  $P(abc|xyz)$, which is 2-local  across the bipartite
cut $(A|BC)$.  Then, the  correlation $P$ is \textit{2-sublocal across
  the bipartite cut $(A|BC)$} precisely if
\begin{align}
P(abc|xyz)&=\sum_{\lambda=0}^{d_\lambda-1} p_\lambda P_\lambda(a|x)\,P_\lambda(bc|yz), 
\label{eq:2sublocal}
\end{align}
 where  $ \sum_{\lambda=0}^{d_\lambda-1}  p_\lambda  = 1$;  $d_\lambda
 \leq \min\{d_A, d_Bd_C\}$.  Here,  $P_\lambda(bc|yz)$ is an arbitrary
 NS  box (If  $P_\lambda(bc|yz)$ is  local box,  then $P(abc|xyz)$  is
 fully local).    Tripartite  superlocality across  the bipartite
   cut $(A|BC)$ holds iff there  is no 2-sublocal decomposition across
   the bipartite cut $(A|BC)$. \hfill $\blacksquare$
\end{definition}
Tripartite superlocality  across other possible bipartite  cuts can be
defined in a  similar way.  The nonconvex set of  all 2-sublocal boxes
across different possible  bipartite cuts denoted $\mathcal{L}_2^\ast$
satisfies $\mathcal{L}^\ast \subset \mathcal{L}^\ast_2$.

Now we define genuine tripartite superlocality.
\begin{definition}
A  tripartite local  (fully  local or  $2$-local)  correlation box  is
called \textit{genuinely}  superlocal iff it is  superlocal across all
possible bipartitions.  \hfill $\blacksquare$
\end{definition}

A  fully local  correlation $P(abc|xyz)$  that isn't  $3$-sublocal has
\textit{absolute} superlocality.    In such a  correlation, there
  is  at least  one partition  for which  2-sublocality doesn't  hold.
Obviously, a fully local correlation that is genuinely superlocal must
be  absolutely  superlocal also.   Here  we  may remark  that  genuine
superlocality can occur  for absolutely nonlocal (i.e.,  which are not
fully local), but $2$-local correlations.

Two different  classical simulation protocols of  tripartite $2$-local
boxes (they may be fully local as well) are depicted in Fig. \ref{gsl}
and Fig.  \ref{sl}. Fig.  \ref{gsl} is  applicable when  the $2$-local
tripartite  box   is  fully  local   as  well;  on  the   other  hand,
Fig. \ref{sl} is  applicable when the $2$-local tripartite  box may or
may not be  fully local. We also use the  notation $\lambda$ to denote
$\lambda_{A-BC}$ for our convenience in calculations.
\begin{figure}
\centering
\includegraphics[width=0.50\textwidth]{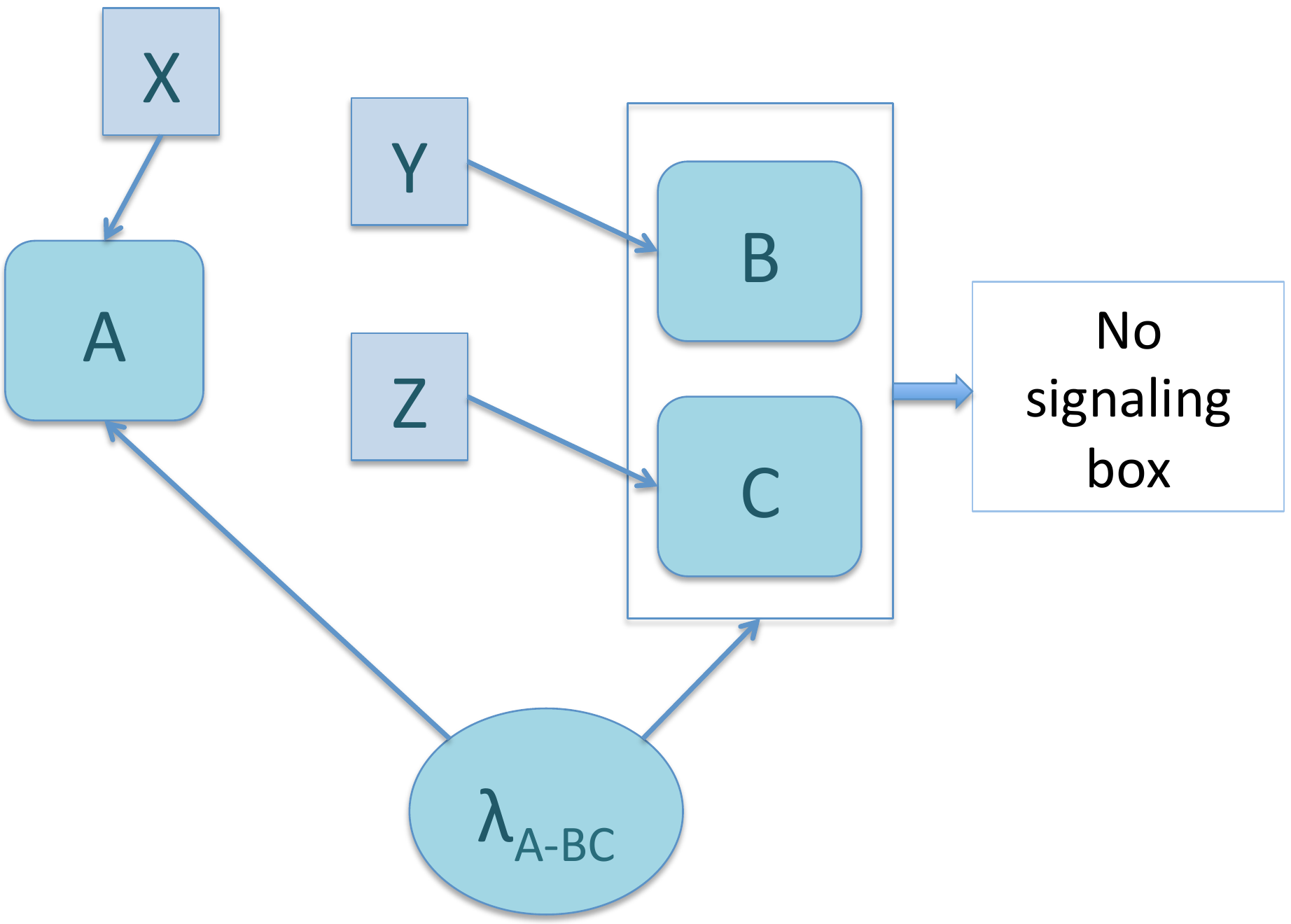}
\caption{Directional  acyclic  graph   illustrating  simulation  of  a
  2-local tripartite  box in  the following classical  protocol: Alice
  shares hidden variable $\lambda_{A-BC}$  with Bob-Charlie.   For
  each $\lambda_{A-BC}$,  the correlations of  Bob and Charlie  are NS
  boxes.}\label{sl}
\end{figure}

\subsection{Noisy Svetlichny-box}

The   generalized  GHZ   (GGHZ)  state   in  $\mathbb{C}^{2}   \otimes
\mathbb{C}^{2} \otimes \mathbb{C}^{2}$, \be
\label{gghz}
\ket{\psi_{GGHZ}}=\cos\theta\ket{000}+\sin\theta\ket{111},
\end{equation}
gives rise  to the   Svetlichny  family of  quantum correlations,
defined by:
\begin{equation}
P^{\mu}_{SvF} =  \frac{2+(-1)^{a\oplus b  \oplus c \oplus  xy\oplus xz
    \oplus yz }\sqrt{2}\mu}{16}; \quad 0<\mu\le 1 \label{SvF}
\end{equation}
for suitable choice of projective measurements.
The  above  box  violates  the Svetlichny  inequality  (given  in  Eq.
(\ref{SI}))      for       $\mu>\frac{1}{\sqrt{2}}$,      and      for
$\mu\le\frac{1}{\sqrt{2}}$  it is  fully local  (i.e., 3-local)  as in
this range the correlation does not violate any Bell inequality.

The canonical decomposition of the Svetlichny family (\ref{SvF})
is the ``noisy Svetlichny-box''
\begin{equation}
P^{\mu}_{SvF} =p_{Sv}P^{0000}_{Sv}+(1-p_{Sv})P_N; \quad 0<p_{Sv}\le 1, \label{nSv}
\end{equation}
with   $p_{Sv}=\mu/\sqrt{2}$,    which   is   a   special    case   of
Eq. (\ref{eq:psvl}).   Here, $P_N$ is  the maximally mixed  box, i.e.,
$P_N(abc|xyz)=1/8$  for all  $x,y,z,a,b,c$.  That  Eq. (\ref{nSv})  is
indeed canonical  decomposition is  proved in detail  in \cite{Jeb17}.
Briefly, the proof makes use of non-trivial symmetry properties of the
extreme boxes of the polytope  $\mathcal{R}$, such as for example that
every  Svetlichny  box  (say, $P_{Sv}^{0000}$)  has  a  ``complement''
(here:  $P_{Sv}^{0001}$) such  that their  uniform mixture  yields the
maximally                mixed               box,                i.e.,
$P_N=\frac{1}{2}(P^{0000}_{Sv}+P^{0001}_{Sv})$.     Note   that    the
Svetlichny family (\ref{SvF}) has  $\mathcal{G}=4\sqrt{2} \mu$ and the
Svetlichny-local box  $P_N$ in  the decomposition (\ref{nSv})  of this
family has $\mathcal{G}=0$,    indicating that this decomposition is
  indeed canonical. This implies that  the Svetlichny strength of the
Svetlichny family can be calculated from $\mathcal{G}$ and is given by
$\mathcal{G}(P^{\mu}_{SvF})/8=\frac{\mu}{\sqrt{2}}$ as stated above.

Therefore, the Svetlichny-box fraction in  Eq. (\ref{nSv}) can be read
off  as the  \textit{Svetlichny  strength} of  the Svetlichny  family.
Since the  Svetlichny family has  nonzero Svetlichny strength  for any
$\mu>0$,  the quantum  simulation  of these  correlations  by using  a
three-qubit  system necessarily  requires genuine  quantumness in  the
state \cite{Jeb17}.   In this light, the  Svetlichny strength $p_{Sv}$
satisfies the  relation $p_{Sv}=\frac{\sqrt{\tau_3}}{\sqrt{2}}$, where
$\tau_3=\sin^22\theta$  is  the  three-tangle  \cite{CKW00} of 
$\ket{\psi_{GGHZ}}$, when the Svetlichny family is simulated 
by $\ket{\psi_{GGHZ}}$ for  the
noncommuting projective  measurements corresponding to  the operators:
$A_0=\sigma_x$,   $A_1=\sigma_y$,  $B_0=(\sigma_x-\sigma_y)/\sqrt{2}$,
$B_1=(\sigma_x+\sigma_y)/\sqrt{2}$,         $C_0=\sigma_x$         and
$C_1=\sigma_y$.

As noted above, for $0  < \mu\le\frac{1}{\sqrt{2}}$ the fully local Svetlichny
family  can   be  decomposed  as   a  convex  mixture  of   the  3-local
deterministic boxes. In this range, the fully local Svetlichny family can be decomposed 
in the following $2$-local form across the bipartition $(A|BC)$:
\begin{align}
P^{\mu}_{SvF}&= \sum_{\lambda=0}^{3} p_{\lambda} P_{\lambda}^{Sv} (a|x) P_{\lambda}^{Sv} (b c|y z), \label{Svlhvm}
\end{align}
where $P_{\lambda} ^{Sv}(a|x)$ are different deterministic distributions and  $P_{\lambda}^{Sv} (b c|y z)$ are local boxes 
(see \ref{2lfsv} for the derivation of the above decomposition). 
For the fully  local  Svetlichny  family  ($0<  \mu  \leq
\frac{1}{\sqrt{2}}$),  the decomposition (\ref{Svlhvm}) defines a classical simulation 
protocol  
where Alice shares hidden variable $\lambda_{A-BC}$ of dimension $4$ with Bob-Charlie as in Fig. \ref{gsl}.

\begin{Theorem}
The  fully  local  Svetlichny family  $P^{\mu}_{SvF}$  ($0  <  \mu  \leq
\frac{1}{\sqrt{2}}$) is genuinely tripartite superlocal.
\label{thm:svet}
\end{Theorem}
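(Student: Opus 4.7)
The plan is to argue by contradiction: I will show that no 2-sublocal decomposition of $P^{\mu}_{SvF}$ across the bipartition $(A|BC)$ exists with the required dimension bound $d_{\lambda}\le\min\{d_{A},d_{B}d_{C}\}=2$ (since each party holds a qubit), thereby ruling out $d_{\lambda}\in\{1,2\}$. Genuine superlocality will then follow from the permutation symmetry of $P^{\mu}_{SvF}$ in the three parties, which lets the same argument be recycled verbatim across $(B|AC)$ and $(C|AB)$.

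The first step is to gather the relevant expectation values of $P^{\mu}_{SvF}$. A direct calculation from the explicit form $[2+(-1)^{a\oplus b\oplus c\oplus xy\oplus xz\oplus yz}\sqrt{2}\mu]/16$ shows that all single-party and bipartite marginals vanish, i.e.\ $\braket{A_{x}}=\braket{B_{y}}=\braket{C_{z}}=0$ and $\braket{A_{x}B_{y}}=\braket{A_{x}C_{z}}=\braket{B_{y}C_{z}}=0$, while the tripartite correlator is $\braket{A_{x}B_{y}C_{z}}=(-1)^{xy+xz+yz}\mu/\sqrt{2}$. Next, assuming a 2-sublocal decomposition $P^{\mu}_{SvF}(abc|xyz)=p_{0}P_{0}(a|x)P_{0}(bc|yz)+p_{1}P_{1}(a|x)P_{1}(bc|yz)$, I set $\alpha^{\lambda}_{x}:=\braket{A_{x}}_{\lambda}$ and $\beta^{\lambda}_{yz}:=\braket{B_{y}C_{z}}_{\lambda}$ and take $p_{0},p_{1}>0$ (the degenerate case $p_{\lambda}=0$ gives an outright product across $(A|BC)$, handled separately below). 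The marginal identities $\braket{A_{x}}=\braket{B_{y}C_{z}}=0$ then force $\alpha^{0}_{x}=-(p_{1}/p_{0})\alpha^{1}_{x}$ and $\beta^{0}_{yz}=-(p_{1}/p_{0})\beta^{1}_{yz}$; substituting these into the decomposition of the tripartite correlator $\braket{A_{x}B_{y}C_{z}}=p_{0}\alpha^{0}_{x}\beta^{0}_{yz}+p_{1}\alpha^{1}_{x}\beta^{1}_{yz}$ and using $p_{0}+p_{1}=1$ collapses the two-term sum to the rank-one product $(p_{1}/p_{0})\,\alpha^{1}_{x}\beta^{1}_{yz}$.

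The contradiction is then immediate from the rank of the tripartite correlator viewed as a $2\times 4$ matrix $M_{x,(yz)}:=(-1)^{xy+xz+yz}$, whose rows $(1,1,1,-1)$ and $(1,-1,-1,-1)$ are linearly independent, so that $M$ has rank $2$ and cannot equal any rank-one outer product $u_{x}v_{yz}$. The degenerate sub-cases are handled at once: any vanishing $p_{\lambda}$, $\alpha^{1}_{x}$, or $\beta^{1}_{yz}$ forces at least one Svetlichny entry to be zero, contradicting $\mu>0$. Since the exponent $xy+xz+yz$ is symmetric in $x,y,z$ and $P^{\mu}_{SvF}$ is invariant under permutations of the parties, the same rank-two obstruction rules out 2-sublocal decompositions across $(B|AC)$ and $(C|AB)$, yielding genuine tripartite superlocality.

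The main step requiring care is the rank-collapse derivation, not the rank computation itself. It is precisely the vanishing of the bipartite marginal $\braket{B_{y}C_{z}}$ (together with that of $\braket{A_{x}}$) which makes the two hidden-variable branches proportional and forces the apparently rank-two decomposition of $\braket{A_{x}B_{y}C_{z}}$ to collapse to rank one; without this symmetry feature of the Svetlichny family the obstruction would not arise. Once the rank-one collapse is in hand, the rank-two structure of the Svetlichny matrix $M$ and the permutation-symmetry argument for the remaining two bipartitions are automatic.
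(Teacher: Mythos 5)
Your proof is correct, and it takes a genuinely different and substantially more economical route than the paper's. The paper proves superlocality across $(A|BC)$ by an exhaustive case analysis on the structure of Alice's local response functions: it first restricts to deterministic strategies for Alice, uses the uniform marginal $P_{SvF}(a|x)=1/2$ to pin down the only admissible pairs of deterministic boxes at $d_\lambda=2$ (and to exclude $d_\lambda=3$), checks in an appendix by explicit probability tables that neither pair reproduces all tripartite distributions, and then handles non-deterministic strategies via a separate ``merging'' argument that reduces them to dimension-$4$ deterministic models with coinciding Bob--Charlie boxes. Your argument replaces all of this with a single linear-algebraic obstruction: the vanishing of $\braket{A_x}$ and $\braket{B_yC_z}$ forces the two branches of any two-term decomposition $\sum_\lambda p_\lambda P_\lambda(a|x)P_\lambda(bc|yz)$ to have proportional correlators, so the tripartite correlator collapses to a rank-one outer product, whereas the matrix $M_{x,(yz)}=(-1)^{xy+xz+yz}\mu/\sqrt{2}$ has rank $2$ for $\mu>0$. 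I verified the collapse computation ($p_0\alpha^0_x\beta^0_{yz}+p_1\alpha^1_x\beta^1_{yz}=(p_1/p_0)\alpha^1_x\beta^1_{yz}$ using $p_0+p_1=1$) and the rank claim (rows $(1,1,1,-1)$ and $(1,-1,-1,-1)$ are independent); the degenerate cases $p_\lambda=0$ and $d_\lambda=1$ are indeed absorbed, since the marginal constraints then kill the correlator outright. What your approach buys is considerable: it treats deterministic and non-deterministic strategies for Alice uniformly, it is manifestly insensitive to whether $P_\lambda(bc|yz)$ is local or a general NS box (so it would also yield Theorem~3 for the $2$-local Mermin family with no extra work, provided the corresponding correlator matrix is checked to have rank $2$), and the permutation symmetry of the exponent $xy\oplus xz\oplus yz$ makes the extension to the cuts $(B|AC)$ and $(C|AB)$ genuinely automatic rather than asserted. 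The only stylistic caveat is that the phrase ``forces at least one Svetlichny entry to be zero'' in your degenerate-case discussion is doing less work than the rank obstruction itself, which already covers those sub-cases; but this does not affect correctness.
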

\begin{proof} 
Let us try to reproduce the fully local  Svetlichny family
$P^{\mu}_{SvF}$ ($0 < \mu \leq \frac{1}{\sqrt{2}}$) in the scenario as
in  Fig.    \ref{gsl}  where  Alice  preshares   the  hidden  variable
$\lambda_{A-BC}$ of dimension $2$ with Bob-Charlie. Before proceeding,
we want  to mention that in  case of  Svetlichny  family, all the
marginal  probability  distributions of  Alice,  Bob  and Charlie  are
maximally mixed:
\begin{equation}
\label{mar}
P_{SvF}(a|x)  =  P_{SvF}(b|y)  = P_{SvF}(c|z)  =  \frac{1}{2}  \forall
a,b,c,x,y,z.
\end{equation}   
Let us now try  to  check whether  the  fully local  
Svetlichny family $P^{\mu}_{SvF}$ ($0  < \mu \leq \frac{1}{\sqrt{2}}$)
can be decomposed in the following form:
\begin{equation} \label{model2dim}
P^{\mu}_{SvF} = \sum_{\lambda=0}^{1} p_{\lambda} P_{\lambda}^{Sv} (a|x) P_{\lambda} ^{Sv}(b c|y z),
\end{equation}
where $p_0=x_0$, $p_1=x_1$  ($0 <x_0<1$, $0  <x_1<1$, $x_0+x_1  =1$)
and  $P_{\lambda}^{Sv} (b c|y z)$ are  local boxes. Let us  assume that Alice's
strategy   to be  deterministic   one, i.e.,  each   of  the   two  probability
distributions $P_{0}^{Sv} (a|x)$ and  $P_{1}^{Sv} (a|x)$ in the above decomposition
belongs to any one among $P_D^{00}$, $P_D^{01}$, $P_D^{10}$ and $P_D^{11}$. In
order to satisfy the  marginal probabilities for Alice $P_{SvF}(a|x)$,
the only  two possible choices  of $P_{0} ^{Sv}(a|x)$  and $P_{1}^{Sv}
(a|x)$ are:
\begin{enumerate}
\item $P_D^{00}$ and $P_D^{01}$ with $x_0=x_1=\frac{1}{2}$
\item $P_D^{10}$ and $P_D^{11}$ with $x_0=x_1=\frac{1}{2}$.
\end{enumerate}

Now, it can be easily checked  that none of these two possible choices
will  satisfy  all  the  tripartite  joint  probability  distributions
$P_{SvF}^{\mu}$  simultaneously (for  detailed  calculations, see   \ref{SvF2lhv}). 
It  is, therefore,  impossible to  reproduce the  fully local box
$P^{\mu}_{SvF}$ ($0 < \mu \leq \frac{1}{\sqrt{2}}$) in the scenario as
in  Fig.    \ref{gsl}  where  Alice  preshares   the  hidden  variable
$\lambda_{A-BC}$ of dimension $2$ with Bob-Charlie and uses deterministic 
strategy for each $\lambda_{A-BC}$.

The  fully local box
$P^{\mu}_{SvF}$ ($0 < \mu \leq \frac{1}{\sqrt{2}}$) cannot be reproduced
by a classical simulation model as in Eq. (\ref{model2dim}) with hidden  variable
$\lambda_{A-BC}$ of dimension $2$
even if Alice uses \textit{nondeterministic strategy} for each $\lambda_{A-BC}$.
To see this, we note that from any
decomposition of  the fully local box
in terms of fully deterministic boxes (\ref{DB}), one may derive a
classical simulation protocol as 
in  Fig.    \ref{gsl} with different deterministic distributions at Alice's side. Any such classical simulation protocol
does not require Alice to preshare the hidden variable $\lambda_{A-BC}$
of dimension more than $4$ since there are only $4$ possible different deterministic distributions given
by Eq. (\ref{alicedet}) at Alice's side. Hence, a classical simulation model with hidden  variable
$\lambda_{A-BC}$ of dimension $2$ of the  fully local box
$P^{\mu}_{SvF}$ ($0 < \mu \leq \frac{1}{\sqrt{2}}$) can be achieved by constructing a classical simulation model of the  fully local box
$P^{\mu}_{SvF}$ with hidden variable
$\lambda_{A-BC}$ of dimension $3$ or $4$ with different deterministic distributions at Alice's side 
followed by taking equal joint probability distributions at Bob-Charlie's side as common and making the 
corresponding distributions at Alice's side nondeterministic.

Let us now try check whether the fully local  noisy Svetlichny-box
$P^{\mu}_{SvF}$ ($0 < \mu  \leq \frac{1}{\sqrt{2}}$) can be simulated by a classical simulation 
model in the scenario as in Fig. \ref{gsl} where Alice shares the hidden variable of dimension $d_{\lambda_{A-BC}}=3$
and uses different deterministic strategy at each $\lambda_{A-BC}$. 
In this case, we assume that the box can be decomposed in the following way:
\begin{equation}
P^{\mu}_{SvF} = \sum_{\lambda=0}^{2} p_{\lambda} P_{\lambda} ^{Sv}(a|x) P_{\lambda}^{Sv} (b c|y z).
\end{equation}
Here, $p_0= x_0$,  $p_1 = x_1$, $p_2  = x_2$ ($0 <x_0<1$, $0  <x_1<1$, $0 <x_2<1$,
$x_0+x_1+x_2 =1$) and $P_{\lambda} ^{Sv}(a|x)$ are deterministic distributions and
$P_{\lambda}^{Sv} (b c|y z)$ are local boxes. 
Since Alice's distributions are  deterministic, the three probability distributions 
$P_{0} ^{Sv}(a|x)$, $P_{1} ^{Sv}(a|x)$
and  $P_{2} ^{Sv}(a|x)$  must be  equal to  any three among $P_D^{00}$,
$P_D^{01}$, $P_D^{10}$  and $P_D^{11}$. But any  such combination will
not satisfy  the marginal  probabilities $P_{SvF}(a|x)$ for  Alice. This implies that the  fully local box
$P^{\mu}_{SvF}$ ($0 < \mu \leq \frac{1}{\sqrt{2}}$) cannot be reproduced in
any classical simulation protocol with different deterministic distributions $P_{\lambda} ^{Sv}(a|x)$ at Alice's side, 
where   Alice  preshares  the   hidden  variable
$\lambda_{A-BC}$ of dimension $3$ with Bob-Charlie.

Therefore, in the  classical simulation model for the fully local Svetlichny family in the scenario 
as in Fig. \ref{gsl} where Alice uses deterministic strategies, Alice has to share the hidden variable of dimension  
$d_{\lambda_{A-BC}}=4$.  

Suppose the fully local  Svetlichny family
$P^{\mu}_{SvF}$ ($0 < \mu \leq \frac{1}{\sqrt{2}}$) can be reproduced by the 
following classical simulation model:
\begin{equation}
\label{new1}
P^{\mu}_{SvF} = \sum_{\lambda=0}^{3} p_{\lambda} P_{\lambda}^{Sv} (a|x) P_{\lambda}^{Sv} (b c|y z),
\end{equation} 
where $P_{\lambda}^{Sv} (a|x)$ are different deterministic distributions
and either any three of the four joint probability distributions $P_{\lambda}^{Sv} (b c|y z)$ 
are equal to each other, or there exists two sets each containing two equal joint probability 
distributions $P_{\lambda}^{Sv} (b c|y z)$; $0 < p_{\lambda} < 1$ for $\lambda$ = $0,1,2,3$; 
$\sum_{\lambda=0}^{3} p_{\lambda} = 1$. Then taking equal joint probability distributions 
$P_{\lambda}^{Sv} (b c|y z)$ at Bob-Charlie's side as common and making corresponding distribution 
at Alice's side non-deterministic will reduce the dimension of the hidden variable $\lambda_{A-BC}$ 
from $4$ to $2$. For example, let us consider
\be
P_{0}^{Sv} (b c|y z) = P_{1}^{Sv} (b c|y z) = P_{2}^{Sv} (b c|y z).
\ee
Now in order to satisfy Alice's marginal given by Eq. (\ref{mar}), one must take $p_0$ = $p_1$ = $p_2$ = $p_3$ = $\frac{1}{4}$. 
Hence, the decomposition (\ref{new1}) can be written as, 
\begin{equation}
\label{new2}
P^{\mu}_{SvF} = q_0 \mathbb{P}^{Sv}_{0}(a|x) P_{0}^{Sv} (b c|y z) + p_{3} P_{3}^{Sv} (a|x) P_{3}^{Sv} (b c|y z),
\end{equation} 
where  $
\mathbb{P}^{Sv}_{0}(a|x) = \frac{P_{0}^{Sv} (a|x)+  P_{1}^{Sv} (a|x)+ P_{2}^{Sv} (a|x)}{3},
       $
which is a non-deterministic distribution at Alice's side, and
$
q_0 = \frac{3}{4}.
$
The decomposition (\ref{new2}) represents a classical simulation 
protocol of the fully local  Svetlichny family $P^{\mu}_{SvF}$ ($0 < \mu \leq \frac{1}{\sqrt{2}}$) 
with different deterministic/non-deterministic distributions at Alice's side, where Alice shares hidden variable
$\lambda_{A-BC}$ of dimension $2$ with Bob-Charlie. Now in this protocol, considering arbitrary joint probability 
distributions $P_{\lambda}^{Sv} (b c|y z)$ at Bob-Charlie's side (without considering any constraint as 
in the case presented in \ref{SvF2lhv}),
it can be checked
that all the tripartite distributions of $P^{\mu}_{SvF}$ are not reproduced simultaneously.

There are the following other cases in which the dimension of the hidden variable $\lambda_{A-BC}$ 
can be reduced from $4$ to $2$ in the classical simulation model as in Eq. (\ref{new1}): 
\begin{center}
$P_{0}^{Sv} (b c|y z) = P_{2}^{Sv} (b c|y z) = P_{3}^{Sv} (b c|y z)$;  \\ 
$P_{0}^{Sv} (b c|y z) = P_{1}^{Sv} (b c|y z) = P_{3}^{Sv} (b c|y z)$;   \\
$P_{1}^{Sv} (b c|y z) = P_{2}^{Sv} (b c|y z) = P_{3}^{Sv} (b c|y z)$;  \\
$P_{0}^{Sv} (b c|y z)$ = $P_{1}^{Sv} (b c|y z)$ as well as $P_{2}^{Sv} (b c|y z)$ = $P_{3}^{Sv} (b c|y z)$;\\ 
$P_{0}^{Sv} (b c|y z)$ = $P_{2}^{Sv} (b c|y z)$ as well as $P_{1}^{Sv} (b c|y z)$ = $P_{3}^{Sv} (b c|y z)$;\\ 
$P_{0}^{Sv} (b c|y z)$ = $P_{3}^{Sv} (b c|y z)$ as well as $P_{1}^{Sv} (b c|y z)$ = $P_{2}^{Sv} (b c|y z)$. \\
\end{center}
Now in any of these possible cases, considering arbitrary joint probability distributions 
$P_{\lambda}^{Sv} (b c|y z)$ at Bob-Charlie's side (without considering any constraint), 
it can be checked that all the tripartite distribution $P^{\mu}_{SvF}$ are not reproduced simultaneously. 
Hence, this also holds when the boxes $P_{\lambda}^{Sv} (b c|y z)$ satisfy 
NS principle as well as locality condition.

Hence, one can conclude that it is impossible to reduce the dimension from $4$ to $2$ in the 
classical simulation protocol of the fully local  Svetlichny family $P^{\mu}_{SvF}$ ($0 < \mu \leq \frac{1}{\sqrt{2}}$)
in  the scenario  as  in Fig.   \ref{gsl} .

It  is, therefore,  impossible to  reproduce the  fully local box
$P^{\mu}_{SvF}$ ($0 < \mu \leq \frac{1}{\sqrt{2}}$) with deterministic/non-deterministic 
distributions at Alice's side,  where  Alice  preshares   the  hidden  variable
$\lambda_{A-BC}$ of dimension $2$ with Bob-Charlie  in the scenario as in  Fig.    \ref{gsl}.

It can be checked that the  fully local Svetlichny box $P^{\mu}_{SvF}$ ($0 <
\mu  \leq  \frac{1}{\sqrt{2}}$)  is  non-product.  It  is,  therefore,
impossible to reproduce the  fully local box $P^{\mu}_{SvF}$ ($0 < \mu
\leq \frac{1}{\sqrt{2}}$)  in the  scenario where Alice  preshares the
hidden variable $\lambda_{A-BC}$ of dimension $1$ with Bob-Charlie.

Hence, the dimension of the hidden variable $\lambda_{A-BC}$, 
which Alice preshares with Bob-Charlie to reproduce the fully local box 
$P^{\mu}_{SvF}$ ($0 < \mu \leq \frac{1}{\sqrt{2}}$), must be \textit{greater than $2$}.
Therefore, 
the fully local Svetlichny box $P^{\mu}_{SvF}$ ($0 < \mu \leq \frac{1}{\sqrt{2}}$) has 2-local form across the bipartite cut $(A|BC)$ and is not $2$-sublocal across that bipartite cut. Hence, the fully local Svetlichny box $P^{\mu}_{SvF}$ ($0 < \mu \leq \frac{1}{\sqrt{2}}$) 
is superlocal across that bipartite cut.

It can be  checked that a similar argument holds  across the remaining
two bipartite  cuts $(C|AB)$  and $(B|CA)$, 
i.e., this box is superlocal across all three possible bipartite cuts
and, hence, must be genuinely tripartite superlocal.
\end{proof}

Since by definition, fully local correlations, which are genuinely superlocal, must be
absolutely superlocal as well, it follows from Theorem \ref{thm:svet} that the fully
 local  Svetlichny   box   $P^{\mu}_{SvF}$  ($0   <  \mu   \leq
\frac{1}{\sqrt{2}}$) is absolutely superlocal also.

\subsection{Noisy Mermin-box \label{sec:mermin}}

We  are now  interested in  quantum  correlations that  belong to  the 
Mermin  family  defined  as 
\begin{equation}  P^{\nu}_{MF}=\frac{1+(-1)^{a\oplus  b
    \oplus  c \oplus  xy\oplus xz  \oplus yz}\delta_{x\oplus  y \oplus
    1,z}\nu}{8};    \quad   0<\nu\le    1.    \label{MeF}   
\end{equation}
For $\nu\le\frac{1}{2}$, the above box is fully local as in this range
the correlation does  not violate any Bell inequality.   The above box
is  absolutely  nonlocal,  but  2-local for  $\nu>\frac{1}{2}$  as  it
violates  the  Mermin  inequality   (given  in  Eq.   (\ref{MI}))  for
$\nu>\frac{1}{2}$, but  not any of the  Svetlichny inequalities. 
  Thus,  it isn't  obvious that  this correlation  would be  genuinely
  superlocal, yet this is what will be established below.

The Mermin family (\ref{MeF}) has the canonical decomposition
as the noisy Mermin-box
\begin{equation}  
P^{\nu}_{MF}=p_{M}P_{M}+(1-p_{M})P_N; \quad 0<p_{M}\le  1, \label{nMe}
\end{equation}      
with         $p_{M}=\nu$,  which is a special case of Eq. (\ref{3dfact1}). 
Here,        the         Mermin-box
$P_M=\frac{1}{2}\left(P^{0000}_{Sv}+P^{1110}_{Sv}\right)$.   That  Eq.
(\ref{nMe}) is indeed the  canonical decomposition for Eq. (\ref{MeF})
can be  shown, as with  the case of the  noisy Svetlichny box, 
by making  use of  the non-trivial  symmetry properties  of the
extremal boxes of the  polytope $\mathcal{R}$, such as the
fact that any given Mermin box (say, $P_M$) has a complement such that
their    uniform   mixture    yields    the    white   noise,    i.e.,
$P_N=\frac{1}{2}P_M+\frac{1}{2}P^\prime_M$, where  $P'_M = \frac{1}{2}
\left(P^{0001}_{Sv}+P^{1111}_{Sv}\right)$   \cite{Jeb17}.   Therefore,
the Mermin-box fraction in Eq. (\ref{nMe}) indeed gives Mermin strength
\cite{Jeb17} of the Mermin family.  Since the Mermin family has 
a decomposition as in Eq. (\ref{3dfact1}), its Mermin strength can be
calculated from $\mathcal{Q}$ and is given by $\mathcal{Q}(P^{\nu}_{MF})/4=\nu$ as 
stated above.

For  any
$\nu>0$,  the quantum  simulation  of  the Mermin  family  by using  a
three-qubit  system necessarily  requires genuine  quantumness in  the
state, even if it is fully local or $2$-local.  This is due to the
fact  that the  Mermin  family  has nonzero  Mermin  strength for  any
$\nu>0$ \cite{Jeb17}.   Note that the GGHZ  state, $\ket{\psi_{GGHZ}}$
(given  by Eq. (\ref{gghz})),  gives  rise to  the  Mermin family  with
Mermin strength  $\nu= \sqrt{\tau_3}$ for the  noncommuting projective
measurements   corresponding  to   the  operators:   $A_0  =\sigma_x$,
$A_1=\sigma_y$,  $B_0=\sigma_x$, $B_1=\sigma_y$,  $C_0 =\sigma_x$  and
$C_1 = \sigma_y$ that demonstrates the GHZ paradox.

As noted above, for $0 < \nu\le\frac{1}{2}$ the fully local noisy Mermin box can be
decomposed in a convex mixture of the 3-local deterministic boxes.
In this range, the fully local Mermin family can be decomposed in the following $2$-local 
form across the bipartition $(A|BC)$:

\begin{align} 
P^{\nu}_{MF}&= \sum_{\lambda=0}^{3} r_{\lambda} P_{\lambda}^{M} (a|x) P_{\lambda} ^{M}(b c|y z), \label{mlhvm}
\end{align}
where $P_{\lambda} ^{M}(a|x)$ are different deterministic distributions and $P_{\lambda}^{M} (b c|y z)$ are local boxes (see \ref{2lfm} for the derivation of the above decomposition). 
For the fully  local  noisy  Mermin-box  ($0< \nu \leq \frac{1}{2}$),  the decomposition (\ref{mlhvm}) defines a classical simulation 
protocol
where Alice shares hidden variable $\lambda_{A-BC}$ of dimension $4$ as in Fig. \ref{gsl}. 

\begin{Theorem} 
The  fully  local noisy  Mermin  box  $P^{\nu}_{MF}$  ($0 <  \nu  \leq
\frac{1}{2}$) is genuinely tripartite superlocal.
\label{thm:mermin}
\end{Theorem}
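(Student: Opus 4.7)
The plan is to adapt the strategy used in the proof of Theorem \ref{thm:svet} to the Mermin case, exploiting the structural parallels between the two families. First I would verify that the Mermin family has maximally mixed one-party marginals: summing $P^{\nu}_{MF}(abc|xyz)$ over $b,c$ gives $P_{MF}(a|x) = \tfrac{1}{2}$, because $\sum_{b,c}(-1)^{b\oplus c}=0$ annihilates the $\nu$-dependent term regardless of $\delta_{x\oplus y\oplus 1,z}$. Identical computations yield $P_{MF}(b|y) = P_{MF}(c|z) = \tfrac{1}{2}$. I would also note that the exponent $x\cdot y \oplus x\cdot z\oplus y\cdot z$ is permutation-symmetric in $(x,y,z)$ and that $\delta_{x\oplus y\oplus 1,z}$ is equivalent to $x\oplus y\oplus z = 1$, which is also permutation-symmetric; hence the Mermin box is invariant under permutations of the parties, so proving superlocality across one bipartite cut automatically yields superlocality across the remaining two.

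Next I would try to write a 2-sublocal decomposition across $(A|BC)$ with $d_{\lambda_{A-BC}} = 2$ and Alice employing a \emph{deterministic} strategy, exactly as in the Svetlichny case. The marginal constraint $P_{MF}(a|x) = \tfrac{1}{2}$ forces Alice's two deterministic distributions to be either $\{P_D^{00},P_D^{01}\}$ or $\{P_D^{10},P_D^{11}\}$, each with weights $\tfrac{1}{2}$. I would then show by direct substitution (relegated to an appendix analogous to \ref{SvF2lhv}) that neither choice can reproduce all eight tripartite joint distributions $P^{\nu}_{MF}(abc|xyz)$ simultaneously. The crucial feature that makes this fail is the $\delta_{x\oplus y\oplus 1,z}$ factor, which introduces $xyz$-dependence in the three-party correlator that a dimension-2 deterministic Alice cannot compensate for on the Bob--Charlie side while respecting marginals.

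The non-deterministic $d_\lambda = 2$ case is then handled by the same lifting argument as in Theorem \ref{thm:svet}: any such decomposition arises by taking a deterministic decomposition at $d_\lambda \in \{3,4\}$ with different Alice-deterministic distributions and coalescing branches in which the Bob--Charlie box coincides. Dimension 3 is ruled out immediately by the marginal condition (no three distinct deterministic Alice distributions can mix to $\tfrac{1}{2}$ for all $x$). For $d_\lambda = 4$, every possible coalescence pattern (three-out-of-four equal, or two pairwise coincidences) forces $p_\lambda = \tfrac{1}{4}$ by marginal constraints; I would then check that imposing such coincidences among general NS (or local) Bob--Charlie boxes is incompatible with the $\delta_{x\oplus y\oplus 1,z}$-structure of the target. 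Finally I would note that $P^{\nu}_{MF}$ is non-product, ruling out $d_\lambda = 1$. Combined with the permutation symmetry, this gives superlocality across all three bipartite cuts, i.e., genuine tripartite superlocality.

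The main obstacle I anticipate is the $d_\lambda = 4$ collapsing analysis: there are six distinct coincidence patterns (three ``triple-equal'' and three ``pair-pair'') and for each one needs to rule out reproducibility of the full Mermin statistics under \emph{arbitrary} NS Bob--Charlie boxes, not only local ones. Fortunately, the Mermin family has the same maximally mixed single- and two-party marginals as $P_N$ and differs from it only in the three-party correlator $\langle A_x B_y C_z\rangle = (-1)^{xy\oplus xz\oplus yz}\delta_{x\oplus y\oplus 1,z}\nu$, so the constraint analysis reduces to a small system of linear equations on the Bob--Charlie correlators for each pattern, which can be dispatched by finite case-checking analogous to \ref{SvF2lhv}.
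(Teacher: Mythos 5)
Your proposal is correct and follows essentially the same route as the paper's proof: maximally mixed marginals forcing the admissible deterministic Alice strategies at $d_\lambda=2$ and ruling out $d_\lambda=3$, the lifting of non-deterministic $d_\lambda=2$ models to coalesced $d_\lambda=4$ deterministic decompositions with case-checking over the coincidence patterns, non-productness excluding $d_\lambda=1$, and extension to the other cuts. Your explicit observation that $\delta_{x\oplus y\oplus 1,z}$ and $xy\oplus xz\oplus yz$ make $P^{\nu}_{MF}$ invariant under party permutations is a slightly cleaner justification of the final step than the paper's ``a similar argument holds across the remaining two bipartite cuts,'' but it is a refinement rather than a different argument.
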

\begin{proof} 
In  case of noisy Mermin-box also,  all  the marginal  probability distributions  of
Alice, Bob and Charlie are maximally mixed:
\begin{equation}
P_{MF}(a|x)  =   P_{MF}(b|y)  =  P_{MF}(c|z)  =   \frac{1}{2}  \forall
a,b,c,x,y,z.
\end{equation} 

Let us try to construct a classical simulation protocol for the fully local 
Mermin family $P^{\nu}_{MF}$ ($0< \nu \leq \frac{1}{2}$) with different deterministic distributions $P_{\lambda}^{M}(a|x)$ at Alice's side, 
where Alice shares hidden variable $\lambda_{A-BC}$ of dimension $2$ with Bob-Charlie. In  this case,  the
fully  local   Mermin  family ($0<  \nu  \leq  \frac{1}{2}$) can be
decomposed in the following way:
\begin{equation}
P^{\nu}_{MF} = \sum_{\lambda=0}^{1} r_{\lambda} P_{\lambda} ^{M}(a|x) P_{\lambda}^{M} (b c|y z).
\end{equation}
Here, $r_0= a_0$, $r_1 = a_1$ ($0 <a_0<1$, $0 <a_1<1$, $a_0 + a_1 =1$). Since Alice's distributions 
are deterministic, the two probability distributions $P_{0}^{M} (a|x)$ and 
$P_{1}^{M} (a|x)$ must be equal to any two among $P_D^{00}$, $P_D^{01}$, $P_D^{10}$ and $P_D^{11}$. 
In order to satisfy the marginal probabilities for Alice $P_{MF}(a|x)$, the only two possible choices 
of $P_{0}^{M} (a|x)$ and $P_{1} ^{M}(a|x)$ are:\\
1) $P_D^{00}$ and $P_D^{01}$ with $a_0=a_1=\frac{1}{2}$\\
2) $P_D^{10}$ and $P_D^{11}$ with $a_0=a_1=\frac{1}{2}$.

Now, in a similar process as adopted in case of Svetlichny family, it can be easily checked 
that none of these two possible choices will satisfy all the tripartite joint probability distributions 
$P^{\nu}_{MF}$ simultaneously.

Since Alice's marginal distributions are maximally mixed, it can be shown in a similar way as presented in case of 
noisy Svetlichny box that there does not exist a classical simulation model for the fully local Mermin family in the scenario as in Fig. \ref{gsl} 
where Alice shares the hidden variable of dimension $d_{\lambda_{A-BC}}=3$ and uses different deterministic strategies at each $\lambda_{A-BC}$.

Thus, in the  classical simulation model for the fully local Mermin family in the scenario 
as in Fig. \ref{gsl} where Alice uses different deterministic strategies, Alice has to share the hidden variable of dimension  
$d_{\lambda_{A-BC}}=4$. 
Let us now try to check whether there exists such a classical simulation model 
for  the  fully  local noisy  Mermin  box where $d_{\lambda_{A-BC}}$ can be reduced from $4$ to $2$ by allowing non-deterministic strategies on Alice's side. That is
we try to construct the following classical simulation protocol for the fully  local  noisy  
Mermin-box  ($0< \nu \leq \frac{1}{2}$) with different deterministic distributions $P_{\lambda} ^{M}(a|x)$ at Alice's side,
where Alice shares hidden variable $\lambda_{A-BC}$ of dimension $4$ with Bob-Charlie in  the scenario  as  in Fig. \ref{gsl}:
\begin{equation}
P^{\nu}_{MF} = \sum_{\lambda=0}^{3} p_{\lambda} P_{\lambda}^{M} (a|x) P_{\lambda}^{M} (b c|y z),
\end{equation} 
where either any three of the four joint probability distributions $P_{\lambda}^{SM} (b c|y z)$ are equal to each other, 
or there exists two sets each containing two equal joint probability distributions 
$P_{\lambda}^{M} (b c|y z)$; $0 < p_{\lambda} < 1$ for $\lambda$ = $0,1,2,3$; $\sum_{\lambda=0}^{3} p_{\lambda} = 1$. 
Then, as described earlier in the case of noisy
Svetlichny-box, taking equal joint probability distributions $P_{\lambda}^{M} (b c|y z)$ at Bob-Charlie's side 
as common and making corresponding distribution at Alice's side non-deterministic will reduce the dimension
of the hidden variable $\lambda_{A-BC}$ from $4$ to $2$. Now following the similar procedure adopted in case of noisy Svetlichny box,
one can show that it is impossible to reduce the dimension from $4$ to $2$ in the classical simulation protocol of 
the fully  local  noisy   Mermin-box  ($0< \nu \leq \frac{1}{2}$) in  the scenario  as  in Fig.   \ref{gsl}.

It is, therefore, impossible to reproduce the fully 
local  Mermin family $P^{\nu}_{MF}$ ($0< \nu \leq \frac{1}{2}$) in any classical simulation protocol 
with deterministic/non-deterministic distributions $P_{\lambda} ^{M}(a|x)$ at Alice's side, 
where Alice preshare the hidden variable $\lambda_{A-BC}$ of dimension $2$ with Bob-Charlie in the scenario as 
in Fig. \ref{gsl} .

 It can be checked that the fully local  Mermin family 
 $P^{\nu}_{MF}$ ($0< \nu \leq \frac{1}{2}$) is non-product. It is, therefore, impossible to reproduce 
 the fully local  Mermin family $P^{\nu}_{MF}$ ($0< \nu \leq \frac{1}{2}$) in the scenario where
 Alice preshares the hidden variable $\lambda_{A-BC}$ of dimension $1$ with Bob-Charlie.

 Hence, the dimension of the hidden variable $\lambda_{A-BC}$, which Alice preshares with 
Bob-Charlie to reproduce the fully local  Mermin family 
$P^{\nu}_{MF}$ ($0< \nu \leq \frac{1}{2}$), must be \textit{greater than $2$}. 
Therefore, 
the fully local  Mermin family $P^{\nu}_{MF}$ ($0< \nu \leq \frac{1}{2}$) has 2-local form across the bipartite cut $(A|BC)$ and is not 2-sublocal across that bipartite cut. Hence, the fully local  Mermin family 
$P^{\nu}_{MF}$ ($0< \nu \leq \frac{1}{2}$) is superlocal across the bipartite cut $(A|BC)$.

It can be  checked that a similar argument holds  across the remaining
two bipartite  cuts $(C|AB)$  and $(B|CA)$, 
i.e., this box is superlocal across all three possible bipartite cuts
and, hence, must be genuinely tripartite superlocal.
\end{proof}

 Since the  fully  local noisy  Mermin  box  $P^{\nu}_{MF}$  ($0 <  \nu  \leq
\frac{1}{2}$) is genuinely superlocal, it is absolutely superlocal as well.

Now,  as noted  before, for  $0 <  \nu\le 1$  the 2-local 
Mermin  family can  be decomposed  in a  convex mixture  of the  local
vertices and 2-local vertices. In this range, the Mermin family can be decomposed in the following $2$-local 
form across the bipartition $(A|BC)$: 
\begin{align}
P^{0<\nu<1}_{MF}&= \sum_{\lambda=0}^{3} r_{\lambda} P_{\lambda}^{M} (a|x) P_{\lambda} ^{M}(b c|y z), \label{mlhv3m}
\end{align}
where $P_{\lambda} ^{M}(a|x)$ are different deterministic distributions and $P_{\lambda}^{M} (b c|y z)$ are NS boxes 
(see \ref{2lfm2} for the derivation of the above decomposition). 
For the   $2$-local  noisy   Mermin-box  ($0<  \nu  \leq
1$),  the decomposition (\ref{mlhv3m}) defines a classical simulation 
protocol as in Fig. \ref{sl}   where Alice shares hidden variable
$\lambda_{A-BC}$ of dimension $4$.

Since the fully local noisy Mermin box is  a special case
of the 2-local noisy Mermin box and the proof of Theorem \ref{thm:mermin} is 
independent of the locality condition of the bipartite distributions $P_{\lambda}^{M} (b c|y z)$ at Bob-Charlie's side, 
the proof is also valid when the bipartite distributions $P_{\lambda}^{M} (b c|y z)$ 
at Bob-Charlie's side are NS (local or nonlocal) boxes. Hence, it is not difficult to see that
the proof  of the Theorem \ref{thm:mermin} can  be straightforwardly  adopted to
obtain the following result:
\begin{Theorem}
 The 2-local  noisy Mermin box $P^{\nu}_{MF}$ ($0 <  \nu \leq 1$)
is genuinely tripartite superlocal.
\end{Theorem}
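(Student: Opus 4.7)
The plan is to piggy-back on Theorem \ref{thm:mermin} by verifying that its proof is actually agnostic about the internal structure of the Bob--Charlie distributions $P_\lambda^M(bc|yz)$. The author has essentially flagged this in the paragraph preceding the statement: none of the steps used in ruling out dimension-$1$, $2$, and $3$ simulations on Alice's side invoked that the $P_\lambda^M(bc|yz)$ were local; they only used (i) the algebraic form of $P_{MF}^\nu$ in Eq.~(\ref{MeF}), (ii) that all of Alice's single-party marginals are $1/2$, and (iii) that Alice's deterministic distributions must be drawn from the four possibilities $P_D^{\alpha\beta}(a|x)$ in Eq.~(\ref{alicedet}). None of these depends on $\nu \in (0,1/2]$ versus $\nu \in (0,1]$, and none depends on locality of the Bob--Charlie part.

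First, I would note that Alice's marginal $P_{MF}^\nu(a|x) = 1/2$ holds for all $\nu \in (0,1]$, because summing the $\nu$-dependent term over $b,c$ kills it; the same is true for Bob and Charlie. Second, I would re-run the two sub-arguments of the Theorem~\ref{thm:mermin} proof across the cut $(A|BC)$: (a) a deterministic Alice strategy at dimension $2$ forces one of the two marginal-respecting pairs $\{P_D^{00},P_D^{01}\}$ or $\{P_D^{10},P_D^{11}\}$ at weights $1/2$, and the failure to simultaneously satisfy all tripartite entries of $P_{MF}^\nu$ is an algebraic statement about $P_{MF}^\nu$ alone, independent of what $P_\lambda^M(bc|yz)$ are; (b) the dimension-$3$ case is ruled out exactly as in Theorem~\ref{thm:mermin}, since choosing three of $\{P_D^{00},P_D^{01},P_D^{10},P_D^{11}\}$ never reproduces the maximally mixed Alice marginal; and (c) the dimension-$4$-to-$2$ reduction argument via merging equal Bob--Charlie distributions is a purely combinatorial/averaging argument whose validity is obviously independent of whether those distributions are local or merely NS.

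Having established that no simulation of $P_{MF}^{\nu}$ across $(A|BC)$ exists with $d_{\lambda_{A-BC}} \leq 2$, even when Bob--Charlie are allowed arbitrary NS correlations, I would invoke the permutation symmetry of the Mermin family $P_{MF}^\nu$ under exchange of the three parties to conclude that the same superlocality result holds across the cuts $(B|CA)$ and $(C|AB)$. By the definition of genuine tripartite superlocality, this yields the theorem.

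The main obstacle, such as it is, is purely expositional rather than technical: one has to make explicit that each ``it can be checked'' invocation in the proof of Theorem~\ref{thm:mermin} is in fact a statement about Alice's side and the full tripartite probabilities of $P_{MF}^{\nu}$, with the Bob--Charlie distributions appearing only as free parameters to be matched. Once this is flagged, the extension from $\nu \in (0, 1/2]$ (fully local, local Bob--Charlie pieces) to $\nu \in (0,1]$ ($2$-local, NS Bob--Charlie pieces) is immediate, and no new inequality or box-structure calculation is needed.
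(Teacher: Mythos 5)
Your proposal is correct and follows essentially the same route as the paper: the paper's own proof of this theorem is precisely the observation that the argument for Theorem \ref{thm:mermin} never uses locality of the Bob--Charlie distributions $P_{\lambda}^{M}(bc|yz)$, only Alice's maximally mixed marginals and the tripartite entries of $P^{\nu}_{MF}$, so it carries over verbatim when those distributions are arbitrary NS boxes and $\nu$ ranges over $(0,1]$. The only point worth making explicit, which the paper relegates to \ref{2lfm2}, is that one must first exhibit a $2$-local decomposition of $P^{\nu}_{MF}$ across each cut for $0<\nu\le 1$ so that superlocality (rather than nonlocality) is the applicable notion.
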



\section{Connection between genuine super-locality and genuine nonclassicality}\label{gslgqd}

In Ref.  \cite{GTC}, Giorgi  \etal defined genuine  tripartite quantum
discord   to  quantify   the  quantum   part  of   genuine  tripartite
correlations in  a tripartite  quantum state.   As the  name suggests,
this kind of quantifier also captures genuine quantumness of separable
states.     Genuine   tripartite    quantum    discord   defined    in
Ref. \cite{GCTS} goes to zero  \emph{iff} there exists a bipartite cut
of  the  tripartite system  such  that  no quantum  correlation  exists
between the two parts.  It is known that a bipartite quantum state has
no quantum correlation  as quantified by Alice to  Bob quantum discord
\emph{iff}  it can  be  written in  the  classical-quantum state  form
\cite{Dakicetal}.   We define  tripartite classical-quantum  states as
follows.
\begin{definition}
A fully separable tripartite state  has a classical-quantum state form
across  the bipartite  cut  $(A|BC)$  if it  can  be  decomposed as  
\begin{equation}
\rho^{A|BC}_{CQ}=\sum_ip_i |i\rangle^{A}\langle  i| \otimes \rho^{B}_i
\otimes  \rho^{C}_i,  \label{cqBC} 
\end{equation}  
where  $\{|i\rangle^{A}\}$  is some  orthonormal  basis  of
Alice's Hilbert space $\mathcal{H}_A$.
\end{definition}
Note that  the classical-quantum  states as  defined above  do not have
nonzero  genuine quantum  discord  since Alice's  subsystem is  always
classically  correlated   with  Bob   and  Charlie's   subsystem.   We
characterize  a   (the  fully  separable)  state   as  having  genuine
quantumness, if it cannot be  written in the classical-quantum state in
any bipartite cut as in Eq.  (\ref{cqBC}).

  The  Svetlichny family and  Mermin family violate  a three-qubit
  biseparability    inequality    for    $\mu>1/2$    and    $\nu>1/2$
  \cite{DBM+17}.  Therefore,  in that range these  two families
  certify      genuine     three-qubit      entanglement     in      a
  semi-device-independent   way  as   well  \cite{DBM+17,JEB16}.   For
  $\mu\le1/2$ and $\nu\le1/2$, the  Svetlichny and Mermin families can
  also be reproduced by separable three-qubit states since they do not
  violate any biseparability inequality in this range.

However, for  $\mu, \nu  \in (0,\frac{1}{2}]$,  the simulation  of the
  Svetlichny family  and Mermin family by  using three-qubit separable
  states serves to witness genuine  quantumness in the form of genuine
  quantum discord as they have nonzero Svetlichny strength and nonzero
  Mermin strength, respectively \cite{Jeb17}. This observation prompts
  us to make the following observation.
\begin{observation}
Genuine quantumness (i.e., nonzero discord across any bipartite cut) of 
any correlation  $P$  is  necessary  for  genuine superlocality.
\end{observation}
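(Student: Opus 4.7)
The plan is to establish the contrapositive: if the underlying three-party state $\rho$ fails to have genuine quantumness, i.e.\ if $\rho$ admits a classical--quantum decomposition across at least one of the three bipartite cuts, then $P$ is automatically 2-sublocal across that cut and hence cannot be genuinely superlocal.

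First, without loss of generality I would take the CQ cut to be $(A|BC)$; the two other cases follow by permuting the parties. By the definition in Eq.~(\ref{cqBC}),
\begin{equation}
\rho \;=\; \sum_{i=0}^{d-1} p_i\, |i\rangle^{A}\langle i|\otimes \rho_{i}^{B}\otimes\rho_{i}^{C},
\end{equation}
where $\{|i\rangle^{A}\}$ is an orthonormal family in $\mathcal{H}_A$, so the classical-register size satisfies $d \leq d_A$.

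Second, I would plug this into Born's rule $P(abc|xyz) = \mathrm{Tr}(\rho\, M^{a}_{A_x}\otimes M^{b}_{B_y}\otimes M^{c}_{C_z})$. Because every term in $\rho$ is a product across all three parties and $|i\rangle^A\langle i|$ is rank one, the trace factorizes completely, yielding
\begin{equation}
P(abc|xyz) \;=\; \sum_{i=0}^{d-1} p_i\, P_i(a|x)\, P_i(bc|yz),
\end{equation}
with $P_i(a|x) := \langle i|M^{a}_{A_x}|i\rangle$ and $P_i(bc|yz) := \mathrm{Tr}\bigl((\rho^B_i\otimes\rho^C_i)\, M^{b}_{B_y}\otimes M^{c}_{C_z}\bigr)$, each of which is a bona fide (in fact product) NS box on $BC$.

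Third, I would observe that this is precisely the 2-sublocal form across $(A|BC)$ prescribed by Eq.~(\ref{eq:2sublocal}), with classical-register dimension $d\leq d_A$. Hence $P$ is 2-sublocal across the cut $(A|BC)$ and, by the very definition of 2-sublocality, fails to be superlocal across that cut, contradicting genuine superlocality, which by definition requires superlocality across \emph{every} bipartite cut.

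The main subtlety lies in matching the register dimension $d$ with the bound $\min\{d_A, d_B d_C\}$ appearing in the definition of 2-sublocality. One gets $d\leq d_A$ for free; for the qubit-based scenarios of interest ($d_A = d_B = d_C = 2$) the inequality $d_A\leq d_B d_C$ is immediate. In the general case one can prune the CQ decomposition by grouping together indices $i$ that yield identical pairs $(P_i(a|x), P_i(bc|yz))$ to stay within the allowed dimension, or invoke a Schmidt-like rank argument on $\rho$. Beyond this bookkeeping, the proof is simply Born's rule applied to a state whose $A$-party is diagonal in a fixed orthonormal basis.
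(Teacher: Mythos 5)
Your proposal is correct and follows essentially the same route as the paper: both argue the contrapositive, using the classical--quantum decomposition of Eq.~(\ref{cqBC}) across some cut together with Born's rule to exhibit an explicit 2-sublocal decomposition with the hidden variable playing the role of the classical index $i$, whose dimension is bounded by $d_A$ (equal to $2$ in the three-qubit case the paper treats). You merely spell out the factorization of the trace more explicitly than the paper, which simply asserts that Alice's measurement on the classical register yields a probabilistic strategy of dimension $d_{\lambda_A}=2$.
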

\begin{proof}
  Consider  tripartite   boxes  arising   from  three-qubit
  classical-quantum   states  which   have  the   form  as   given  in
  Eq. (\ref{cqBC})  with $i=0,1$.  In  the Bell scenario that  we have
  considered,  for Alice  measuring  in basis  $\{|i\rangle\}$, it  is
  clear that  the resulting box  can be simulated by  a probabilistic
  strategy using  dimension $d_{\lambda_A}=2$  on Alice's  side.  This
  observation  holds  even when  Alice  measures  in any  other  basis
  (except that  her random number  generator will be possibly  be more
  randomized). This  implies that for  any three-qubit state  which do
  not have genuine quantumness, there  exists a bipartite cut in which
  it  is  not  superlocal.   Therefore,  genuine  quantum  discord  is
  necessary for implying genuine superlocality.  
\end{proof}

\section{Genuine multipartite nonclassicality}\label{gmnl}

Generalizing  the  definitions  presented  in  Section  \ref{prl},  an
$n$-partite  correlation  is  said   to  be  \textit{fully}  local  or
$n$-local if the box has a decomposition of the form:
\begin{align}  
P(a_1,a_2,&\cdots,a_n|x_1,x_2,\cdots,x_n)= \nonumber \\
    &\sum_{\lambda}
p_\lambda  P_\lambda(a_1|x_1)P_\lambda(a_2|x_2)
\cdots P_\lambda(a_n|x_n), 
\label{lhvn} 
\end{align} 
where $\sum_{\lambda} p_\lambda=1$, and $x_1, x_2, \cdots, x_n$ denote
the inputs  (measurement choices) and  $a_1, a_2, \cdots,  a_n$ denote
the outputs (measurement  outcomes) of the parties  $q_1$, $q_2$, ...,
$q_n$ respectively.  An $n$-partite box  that is not $n$-local is said
to have ``absolute nonlocality''.

A  $n$-partite correlation  is  said  to be  $k$-local  if  it can  be
decomposed as a  convex combination of $k$-partitions  such that these
$k$ parts (defined by each  $k$-partition) are locally correlated with
each other. For example, a  $4$-partite correlation is $3$-local if it
can be  decomposed as  a convex combination  of tripartitions  such as
$(q_1|q_2 q_3|q_4)$ such that these  parts are locally correlated each
other.  However,  $q_2 q_3$ may  be nonlocal in itself.   Similarly, a
$4$-partite  correlation is  $2$-local if  it can  be decomposed  as a
convex combination of probability distributions over bipartitions such
that in each bipartition, the two parts are locally correlated, though
within a  part, even nonlocality  may hold. Any $k$-local  correlation is
also $k'$-local where $k^\prime < k$.  Thus, a $4$-local correlation is also
$3$-local  and a  $3$-local  correlation is  also  $2$-local. But  the
converse  is not  true. 

Therefore,  the weakest  form  of locality  is  $2$-locality, and  the
strongest form of nonlocality for  an $n$-partite system is that which
is not $2$-local.  This is called genuine $n$-partite nonlocality, for
which all bipartitions are nonlocal.

An
$n$-partite system  is $n$-sublocal (or, fully sublocal) if each of the $n$ particles  are locally
correlated,  with  the  shared classical  randomness dimension being  less than or equal to the
smallest local Hilbert space dimension among all the  $n$  particles, in analogy with Eq. (\ref{eq:3floc}):
\begin{definition}
Suppose we have an  $n$-partite quantum state in $\mathbb{C}^{d_{s_1}}
\otimes      \mathbb{C}^{d_{s_2}}      \otimes     \cdots      \otimes
\mathbb{C}^{d_{s_n}}$   and  measurements   which   produce  a  fully  local
$n$-partite   box   $P(a_1a_2\cdots   a_n|x_1x_2\cdots   x_n)$.    The
correlation $P$ is \textit{$n$-sublocal} (or, \textit{fully sublocal})
precisely if there exists a decomposition such that
\begin{align}  
P(a_1a_2&\cdots a_n|x_1x_2\cdots x_n)= \nonumber \\
    &\sum^{d_\lambda-1}_{\lambda=0}
p_\lambda  P_\lambda(a_1|x_1)P_\lambda(a_2|x_2)
\cdots P_\lambda(a_n|x_n), 
\label{eq:nfloc} 
\end{align} 
with   $d_\lambda\le   d$,   and   $d=\min\{d_{s_1},d_{s_2},   \cdots,
d_{s_n}\}$, where  $d_{s_j}$ is the local Hilbert space dimension of the  $j$th particle.
The fully local correlation  $P$   is   absolutely  superlocal   if  it   is not
$n$-sublocal. In other words, absolute $n$-partite superlocality holds
iff  there is  no $n$-sublocal  decomposition (\ref{eq:nfloc})  of the
given  fully local box.  Here
$\sum^{d_\lambda-1}_{\lambda=0}   p_\lambda  =1$.\hfill $\blacksquare$
\end{definition}

A $n$-partite  correlation $P$ is  said to be ``$k$-sublocal  across a
particular  $k$-partition''   if  these   $k$  parts   are  sublocally
correlated with each other.  For example, a $4$-partite correlation is
$3$-sublocal across  the tripartite  cut $(q_1|q_2 q_3|q_4)$  if these
three parts are sublocally correlated  with each other.  Note that the
part $(q_2 q_3)$ may be sublocal or superlocal or even nonlocal.

Thus, a  correlation which  is $4$-sublocal across  some $4$-partition
$Q^{(|||)}$, is  also $3$-sublocal across any  tripartition $Q^{(||)}$
obtained  by  merging any  two  partitions  of $Q^{(|||)}$,  and  also
$2$-sublocal across any bipartition  $Q^{(|)}$ obtained by merging any
two partitions of $Q^{(||)}$.

In  general,  any  correlation,  which  is  $k$-sublocal  across  some
$k$-partition,    is    also     $k^\prime$-sublocal    across    some
$k^\prime$-partition    where    $k^\prime    <    k$,    where    the
$k^\prime$-partition  has  been  obtained  by  merging  parts  of  the
$k$-partition.   Therefore,   the  weakest  form  of   sublocality  is
$2$-sublocality.   \textit{An $n$-partite  correlation $P$  that isn't
  sublocal   --  or   equivalently,  is   superlocal--  across   every
  bipartition, is genuinely $n$-partite superlocal.}        Note that,
  because of the non-convexity associated with sublocal sets, a convex
  combination   of   2-sublocal   correlations   needn't   itself   be
  2-sublocal.

In line with our  definition for superlocality for a multipartite
system, we may  define \textit{$n$-concord} as the  absence of quantum
discord   across  all   cuts  splitting   all  $n$   particles,  i.e.,
$(q_1|q_2|\cdots|q_n)$.  For example, the  system $q_1 q_2 q_3 q_4$ has 4-concord
if there  is no  quantum discord  across each  of the  cuts $q_1|q_2|q_3|q_4$.
Absolute $n$-partite  discord \cite{MPS+10} holds when  the state   in  
question is not $n$-concordant. Genuine $n$-partite discord holds  when  
the state in question  lacks  $2$-concord form  across all possible bipartitions 
\cite{GTC,MZ12,GCTS,GTQD}. The   
relation between  different such measures of discord and superlocality 
for multipartite systems, such as that noted for the        tripartite       
system in Section \ref{gslgqd}, is an interesting issue       meriting 
further studies. 

\section{Conclusions}\label{conc}

In Ref.  \cite{Jeb17}, two  quantities called, Svetlichny strength and
Mermin strength, have been introduced to study genuine nonclassicality
of tripartite  correlations.  By  using these  two quantities,  it has
been demonstrated that genuine tripartite quantum discord is necessary
to simulate certain fully local  or 2-local tripartite correlations if
the  measured  tripartite systems  are  restricted  to be  three-qubit
states.

Genuine multipartite quantum nonlocality occurs if the multipartite correlation 
cannot be expressed as a convex combination of all possible bipartitions where the $2$-parts 
defined by each of these bipartitions are locally correlated with each other. 
Our motivation has been to
perform   the
characterization  of  genuine   nonclassicality  of  local  (fully or partially local) tripartite
correlations arising  from the concept of  superlocality, and relating
this to  genuine quantum discord.  Thus,  genuine superlocality, i.e.,
the occurrence  of superlocality across all  bipartitions, provides an
operational definition  of genuine  nonclassicality.  We  have studied
how genuine superlocality  occurs for two families of  local  tripartite 
correlations having their 
nonclassicality quantified in terms of nonzero Svetlichny strength
and nonzero Mermin strength, respectively. 

In Ref. \cite{BP14}, it was demonstrated that certain bipartite separable  states having  quantum discord may 
improve  the so-called random access codes (RAC), which is a
class of communication problem,  if the shared
randomness between the two parties is  limited to be finite.
Recently, in Ref. \cite{HSM+17}, a family of RAC protocols have been considered in tripartite 
quantum networks and is associated with genuine tripartite nonlocality. Our work on characterizing 
genuine quantumness of certain local tripartite or multipartite correlations in the limited dimensional simulation scenario
and its link with genuine quantum discord motivates the following study. 
It would be interesting to investigate  quantum advantage for the above RAC protocol associated 
with tripartite or multipartite quantum networks in the presence of limited 
shared randomness by using tripartite or multipartite separable states with genuine quantum discord.

\section*{Acknowledgements}

CJ thanks  Manik Banik and  Arup Roy for discussions.  DD acknowledges
the  financial  support  from   University  Grants  Commission  (UGC),
Government  of  India. CJ,  SG  and  ASM acknowledge  support  through
Project SR/S2/LOP-08/2013 of the DST, Govt.  of India.

\newpage 

\bibliographystyle{unsrt.bst}

\bibliography{JEB}

\newpage

\appendix

\section{$2$-local form across the bipartition $(A|BC)$ for the fully  local  Svetlichny family  $P^{\mu}_{SvF}$  in the range $0  <  \mu  \leq
\frac{1}{\sqrt{2}}$} \label{2lfsv}

For $0  < \mu \le \frac{1}{\sqrt{2}}$ the fully local Svetlichny
family  can   be  decomposed  as   a  convex  mixture  of   the  3-local
deterministic boxes. In this range, we consider the following decomposition for the 
Svetlichny family
in terms of the 3-local
deterministic boxes: 
\begin{align}
P^{\mu}_{SvF}&=\frac{1}{4} P_D^{00} \Bigg\{ \frac{\sqrt{2} \mu}{4} \big( P^{0010}_D+P^{0111}_D+P^{1000}_D+P^{1101}_D \big) \nonumber \\
&+ \frac{1- \sqrt{2} \mu}{4} \big( P_D^{1000} + P_D^{1100} + P_D^{1101} + P_D^{1001} \big) \Bigg\} \nonumber \\
& + \frac{1}{4} P_D^{01} \Bigg\{ \frac{\sqrt{2} \mu}{4} \big( P^{0011}_D+P^{0110}_D+P^{1001}_D+P^{1100}_D \big) \nonumber \\ 
&+ \frac{1- \sqrt{2} \mu}{4} \big( P_D^{1000} + P_D^{1100} + P_D^{1101} + P_D^{1001} \big) \Bigg\} \nonumber \\ 
& + \frac{1}{4} P_D^{10} \Bigg\{ \frac{\sqrt{2} \mu}{4} \big( P^{0000}_D+P^{0101}_D+P^{1011}_D+P^{1110}_D \big) \nonumber \\
&+ \frac{1- \sqrt{2} \mu}{4} \big( P_D^{0000} + P_D^{0101} + P_D^{0001} + P_D^{0100} \big) \Bigg\} \nonumber \\ 
& + \frac{1}{4} P_D^{10} \Bigg\{ \frac{\sqrt{2} \mu}{4} \big( P^{0001}_D+P^{0100}_D+P^{1010}_D+P^{1111}_D \big) \nonumber \\
&+ \frac{1- \sqrt{2} \mu}{4} \big( P_D^{0000} + P_D^{0101} + P_D^{0001} + P_D^{0100} \big) \Bigg\} \nonumber \\ 
&: = \sum_{\lambda=0}^{3} p_{\lambda} P_{\lambda}^{Sv} (a|x) P_{\lambda}^{Sv} (b c|y z), \label{Svlhv}
\end{align}
where $p_0$ = $p_1$ = $p_2$ = $p_3$ = $\frac{1}{4}$;
\begin{align}
&P_{0}^{Sv} (a|x) = P_D^{00}, \quad P_{1} ^{Sv}(a|x) = P_D^{01}, \nonumber \\
&P_{2}^{Sv}(a|x) = P_D^{10}, \quad P_{3}^{Sv}(a|x) = P_D^{11}; \nonumber 
\end{align}
and
\begin{align}
 P_{0}^{Sv} (b c|y z) & =  \Bigg\{ \frac{\sqrt{2} \mu}{4} \big( P^{0010}_D+P^{0111}_D+P^{1000}_D+P^{1101}_D \big) \nonumber \\
& + \frac{1- \sqrt{2} \mu}{4} \big( P_D^{1000} + P_D^{1100} + P_D^{1101} + P_D^{1001} \big) \Bigg\} \nonumber \\
& := \bordermatrix{
\frac{bc}{yz} & 00 & 01 & 10 & 11 \cr
00 & \frac{1+ \sqrt{2} \mu}{4} & \frac{1- \sqrt{2} \mu}{4} & \frac{1- \sqrt{2} \mu}{4} & \frac{1+ \sqrt{2} \mu}{4} \cr
01 & \frac{1}{4} & \frac{1}{4} & \frac{1}{4} & \frac{1}{4} \cr
10 & \frac{1}{4} & \frac{1}{4} & \frac{1}{4} & \frac{1}{4} \cr
11 & \frac{1- \sqrt{2} \mu}{4} & \frac{1+ \sqrt{2} \mu}{4} & \frac{1+ \sqrt{2} \mu}{4} & \frac{1- \sqrt{2} \mu}{4} }, \nonumber
\end{align}
where each  row and column  corresponds to a fixed  measurement $(yz)$
and a fixed outcome $(bc)$  respectively \footnote{Throughout the paper we will
follow the same convention},

$P_{1} ^{Sv}(b c|y z)  = \begin{pmatrix}
\frac{1- \sqrt{2} \mu}{4} && \frac{1+ \sqrt{2} \mu}{4} && \frac{1+ \sqrt{2} \mu}{4} && \frac{1- \sqrt{2} \mu}{4}\\
\frac{1}{4} && \frac{1}{4} && \frac{1}{4} && \frac{1}{4} \\
\frac{1}{4} && \frac{1}{4} && \frac{1}{4} && \frac{1}{4} \\
\frac{1+ \sqrt{2} \mu}{4} && \frac{1- \sqrt{2} \mu}{4} && \frac{1- \sqrt{2} \mu}{4} && \frac{1+ \sqrt{2} \mu}{4}\\
\end{pmatrix} $, 

$ P_{2} ^{Sv}(b c|y z)  = \begin{pmatrix}
\frac{1}{4} && \frac{1}{4} && \frac{1}{4} && \frac{1}{4} \\
\frac{1+ \sqrt{2} \mu}{4} && \frac{1- \sqrt{2} \mu}{4} && \frac{1- \sqrt{2} \mu}{4} && \frac{1+ \sqrt{2} \mu}{4}\\
\frac{1+ \sqrt{2} \mu}{4} && \frac{1- \sqrt{2} \mu}{4} && \frac{1- \sqrt{2} \mu}{4} && \frac{1+ \sqrt{2} \mu}{4}\\
\frac{1}{4} && \frac{1}{4} && \frac{1}{4} && \frac{1}{4} \\
\end{pmatrix} $, 

$ P_{3}^{Sv} (b c|y z)  = \begin{pmatrix}
\frac{1}{4} && \frac{1}{4} && \frac{1}{4} && \frac{1}{4} \\
\frac{1- \sqrt{2} \mu}{4} && \frac{1+ \sqrt{2} \mu}{4} && \frac{1+ \sqrt{2} \mu}{4} && \frac{1- \sqrt{2} \mu}{4}\\
\frac{1- \sqrt{2} \mu}{4} && \frac{1+ \sqrt{2} \mu}{4} && \frac{1+ \sqrt{2} \mu}{4} && \frac{1- \sqrt{2} \mu}{4}\\
\frac{1}{4} && \frac{1}{4} && \frac{1}{4} && \frac{1}{4} \\
\end{pmatrix} $. \\

Note that, in the range $0<  \mu \leq \frac{1}{\sqrt{2}}$, 
each of the  $P_{\lambda} ^{Sv}(b c|y  z)$ given above belongs to 
BB84 family \cite{GBS16} defined as
\begin{equation}
\label{bb84}
P_{BB84}(bc|yz) = \frac{1 + (-1)^{b \oplus c \oplus y \cdot z} \delta_{y,z} V }{4},
\end{equation}
with $0< V=\sqrt{2}\mu \le 1 $,  upto local reversible operations.
The above family is  local,   as   it   satisfies   the   complete   set   of   Bell-CHSH
(Bell-Clauser-Horne-Shimony-Holt)   inequalities   \cite{CHS+69,
  WW01}.  In Ref. \cite{DBD+17}, it has been demonstrated that the BB84 family cannot be 
reproduced by  shared classical randomness of  dimension $d_\lambda \le 3$.
On the other hand, the BB84 family 
can be reproduced by performing appropriate quantum measurements on $2 \otimes 2$ quantum states \cite{GBS16}.
Hence, each  of the  $P_{\lambda}
^{Sv}(b   c|y  z)$   is  superlocal   in  the   range  $0<   \mu  \leq
\frac{1}{\sqrt{2}}$.

For the fully  local  Svetlichny  family  ($0<  \mu  \leq
\frac{1}{\sqrt{2}}$),  the decomposition (\ref{Svlhv}) defines a classical simulation 
protocol  with different deterministic distributions $P_{\lambda} ^{Sv}(a|x)$ at Alice's side, 
where Alice shares hidden variable $\lambda_{A-BC}$ of dimension $4$ with Bob-Charlie as in Fig. \ref{gsl}. Decomposition (\ref{Svlhv}) represents 2-local form across the bipartite cut $(A|BC)$ of the fully  local  Svetlichny  family  ($0<  \mu  \leq
\frac{1}{\sqrt{2}}$). 

\section{Demonstrating impossibility to reproduce fully local noisy Svetlichny family
in the scenario where Alice preshares the hidden variable $\lambda_{A-BC}$ of dimension $2$
with Bob-Charlie  and uses different deterministic strategies for each $\lambda_{A-BC}$.} \label{SvF2lhv}

Let us try to reproduce the fully local noisy Svetlichny family 
$P^{\mu}_{SvF}$ ($0 < \mu \leq \frac{1}{\sqrt{2}}$) in the scenario as in Fig. \ref{gsl} 
where Alice preshares the hidden variable $\lambda_{A-BC}$ of dimension $2$ with Bob-Charlie and she uses different deterministic strategies for each $\lambda_{A-BC}$. 
In this case, we assume that the fully local noisy Svetlichny family 
$P^{\mu}_{SvF}$ ($0 < \mu \leq \frac{1}{\sqrt{2}}$) can be decomposed in the following way:
\begin{equation}
P^{\mu}_{SvF} = \sum_{\lambda=0}^{1} p_{\lambda} P_{\lambda}^{Sv} (a|x) P_{\lambda} ^{Sv}(b c|y z). \label{2dimSvF}
\end{equation}
Here, $p_0=x_0$, $p_1=x_1$ ($0 <x_0<1$, $0 <x_1<1$, $x_0+x_1 =1$). Since Alice's strategy 
is deterministic one, each of the two probability distributions $P_{0}^{Sv} (a|x)$ and $P_{1}^{Sv} (a|x)$ 
must be equal to any one among $P_D^{00}$, $P_D^{01}$, $P_D^{10}$ and $P_D^{11}$. 
In order to satisfy the marginal probabilities for Alice $P_{SvF}(a|x)$ = $\frac{1}{2}$ $\forall a, x$, 
the only two possible choices of $P_{\lambda} ^{Sv}(a|x)$  are:\\
1) $P_D^{00}$ and $P_D^{01}$ with $x_0=x_1=\frac{1}{2}$\\
2) $P_D^{10}$ and $P_D^{11}$ with $x_0=x_1=\frac{1}{2}$.\\

In case of the first choice, let us assume that 
$P_{0} ^{Sv}(a|x) = P_D^{00}$, $P_{1}^{Sv} (a|x) = P_D^{01}$; $P_{0} ^{Sv}(b c|y z)$ 
and $P_{1} ^{Sv}(b c|y z)$ are given by,

$P_{0} ^{Sv}(b c|y z)  := \begin{pmatrix}
u_{11} && u_{12} && u_{13} && u_{14}\\
u_{21} && u_{22} && u_{23} && u_{24} \\
u_{31} && u_{32} && u_{33} && u_{34}\\
u_{41} && u_{42} && u_{43} && u_{44}\\
\end{pmatrix} $, \\

where $0 \leq u_{ij} \leq 1 \forall i,j$, and $ \sum_{j} u_{ij} =1 \forall i$, 
and

$P_{1} ^{Sv}(b c|y z)  := \begin{pmatrix}
w_{11} && w_{12} && w_{13} && w_{14}\\
w_{21} && w_{22} && w_{23} && w_{24} \\
w_{31} && w_{32} && w_{33} && w_{34}\\
w_{41} && w_{42} && w_{43} && w_{44}\\
\end{pmatrix} $, \\

where $0 \leq w_{ij} \leq 1 \forall i,j$, and $ \sum_{j} w_{ij} =1 \forall i$. \\

Now, with this choice, the box $P^{\mu}_{SvF}$ given by the model   (\ref{2dimSvF})
has

\begin{align}
P^{\mu}_{SvF} & = \bordermatrix{
\frac{abc}{xyz} & 000 & 001 & 010 & 011 & 100 & 101 & 110 & 111 \cr
000 & \frac{u_{11}}{2} &  \frac{u_{12}}{2} &  \frac{u_{13}}{2} &  \frac{u_{14}}{2} &  \frac{w_{11}}{2} &  \frac{w_{12}}{2} &  \frac{w_{13}}{2} &  \frac{w_{14}}{2}\cr
001 & \frac{u_{21}}{2} & \frac{u_{22}}{2} & \frac{u_{23}}{2} & \frac{u_{24}}{2} & \frac{w_{21}}{2} & \frac{w_{22}}{2} & \frac{w_{23}}{2} & \frac{w_{24}}{2} \cr
010 & \frac{u_{31}}{2} & \frac{u_{32}}{2} & \frac{u_{33}}{2} & \frac{u_{34}}{2} & \frac{w_{31}}{2} & \frac{w_{32}}{2} & \frac{w_{33}}{2} & \frac{w_{34}}{2} \cr
011 & \frac{u_{41}}{2} & \frac{u_{42}}{2} & \frac{u_{43}}{2} & \frac{u_{44}}{2} & \frac{w_{41}}{2} & \frac{w_{42}}{2} & \frac{w_{43}}{2} & \frac{w_{44}}{2} \cr 
100 & \frac{u_{11}}{2} &  \frac{u_{12}}{2} &  \frac{u_{13}}{2} &  \frac{u_{14}}{2} &  \frac{w_{11}}{2} &  \frac{w_{12}}{2} &  \frac{w_{13}}{2} &  \frac{w_{14}}{2}\cr
101 & \frac{u_{21}}{2} & \frac{u_{22}}{2} & \frac{u_{23}}{2} & \frac{u_{24}}{2} & \frac{w_{21}}{2} & \frac{w_{22}}{2} & \frac{w_{23}}{2} & \frac{w_{24}}{2} \cr
110 & \frac{u_{31}}{2} & \frac{u_{32}}{2} & \frac{u_{33}}{2} & \frac{u_{34}}{2} & \frac{w_{31}}{2} & \frac{w_{32}}{2} & \frac{w_{33}}{2} & \frac{w_{34}}{2} \cr
111 & \frac{u_{41}}{2} & \frac{u_{42}}{2} & \frac{u_{43}}{2} & \frac{u_{44}}{2} & \frac{w_{41}}{2} & \frac{w_{42}}{2} & \frac{w_{43}}{2} & \frac{w_{44}}{2} }, 
\label{b2}
\end{align}
where each row and column corresponds to a fixed measurement $(xyz)$ and a fixed outcome $(abc)$ respectively. \\

From Eq. (\ref{b2}), it can be seen that
\be
P^{\mu}_{SvF} (a b c|001) = P^{\mu}_{SvF} (a b c|101), \nonumber
\ee
which is not true for the fully local Svetlichny family as given in Eq. (\ref{SvF})
with $0 < \mu \leq \frac{1}{\sqrt{2}}$. Because, 
in case of fully local $P^{\mu}_{SvF}$ ($0 < \mu \leq \frac{1}{\sqrt{2}}$) given in Eq. (\ref{SvF}), 
\begin{equation}
P^{\mu}_{SvF} (a b c|001)  = \frac{2+(-1)^{a\oplus b \oplus c }\sqrt{2}\mu}{16} \nonumber
\end{equation}
and 
\begin{equation}
P^{\mu}_{SvF} (a b c|101) = \frac{2+(-1)^{a\oplus b \oplus c \oplus 1}\sqrt{2}\mu}{16}. \nonumber
\end{equation}

Again, from Eq. (\ref{b2}), 
it can be seen that 
\be
P^{\mu}_{SvF} (a b c|010) = P^{\mu}_{SvF} (a b c|110), \nonumber 
\ee
which is not true for the fully local Svetlichny family as given in Eq. (\ref{SvF})
with $0 < \mu \leq \frac{1}{\sqrt{2}}$. Because, in case of fully local $P^{\mu}_{SvF}$ ($0 < \mu \leq \frac{1}{\sqrt{2}}$)
given in Eq. (\ref{SvF}), 
\begin{equation}
P^{\mu}_{SvF} (a b c|010)  = \frac{2+(-1)^{a\oplus b \oplus c }\sqrt{2}\mu}{16} \nonumber
\end{equation}
and
\begin{equation}
P^{\mu}_{SvF} (a b c|110) = \frac{2+(-1)^{a\oplus b \oplus c \oplus 1}\sqrt{2}\mu}{16}. \nonumber
\end{equation}

Hence, in this case,
though the marginal probabilities for Alice $P_{SvF}(a|x)$ are satisfied, 
all the tripartite joint probability distributions $P_{SvF}^{\mu}$ are not satisfied simultaneously.\\

Similarly, in case of the first choice, if we assume that 
$P_{0} ^{Sv}(a|x) = P_D^{01}$, $P_{1}^{Sv} (a|x) = P_D^{00}$, 
then the marginal probabilities for Alice $P_{SvF}(a|x)$ are satisfied, 
but all the tripartite joint probability distributions $P_{SvF}^{\mu}$ are not satisfied simultaneously.\\

Now, in case of the second choice, let us assume that
$P_{0} ^{Sv}(a|x) = P_D^{10}$, $P_{1}^{Sv} (a|x) = P_D^{11}$; 
$P_{0} ^{Sv}(b c|y z)$ and $P_{1} ^{Sv}(b c|y z)$ are given by,\\

$P_{0} ^{Sv}(b c|y z)  = \begin{pmatrix}
u^{'}_{11} && u^{'}_{12} && u^{'}_{13} && u^{'}_{14}\\
u^{'}_{21} && u^{'}_{22} && u^{'}_{23} && u^{'}_{24} \\
u^{'}_{31} && u^{'}_{32} && u^{'}_{33} && u^{'}_{34}\\
u^{'}_{41} && u^{'}_{42} && u^{'}_{43} && u^{'}_{44}\\
\end{pmatrix} $, \\

where $0 \leq u^{'}_{ij} \leq 1 \forall i,j$, and $ \sum_{j} u^{'}_{ij} =1 \forall i$, 
and\\

$P_{1} ^{Sv}(b c|y z)  = \begin{pmatrix}
w^{'}_{11} && w^{'}_{12} && w^{'}_{13} && w^{'}_{14}\\
w^{'}_{21} && w^{'}_{22} && w^{'}_{23} && w^{'}_{24} \\
w^{'}_{31} && w^{'}_{32} && w^{'}_{33} && w^{'}_{34}\\
w^{'}_{41} && w^{'}_{42} && w^{'}_{43} && w^{'}_{44}\\
\end{pmatrix} $, \\

where $0 \leq w^{'}_{ij} \leq 1 \forall i,j$, and $ \sum_{j} w^{'}_{ij} =1 \forall i$. 

Now, with this choice, the box $P^{\mu}_{SvF}$ ($0 < \mu \leq \frac{1}{\sqrt{2}}$) given by the model (\ref{2dimSvF}) 
has,

\begin{align}
P^{\mu}_{SvF}  = \bordermatrix{
\frac{abc}{xyz} & 000 & 001 & 010 & 011 & 100 & 101 & 110 & 111 \cr
000 & \frac{u^{'}_{11}}{2} &  \frac{u^{'}_{12}}{2} &  \frac{u^{'}_{13}}{2} &  \frac{u^{'}_{14}}{2} &  \frac{w^{'}_{11}}{2} &  \frac{w^{'}_{12}}{2} &  \frac{w^{'}_{13}}{2} &  \frac{w^{'}_{14}}{2}\cr
001 & \frac{u^{'}_{21}}{2} & \frac{u^{'}_{22}}{2} & \frac{u^{'}_{23}}{2} & \frac{u^{'}_{24}}{2} & \frac{w^{'}_{21}}{2} & \frac{w^{'}_{22}}{2} & \frac{w^{'}_{23}}{2} & \frac{w^{'}_{24}}{2} \cr
010 & \frac{u^{'}_{31}}{2} & \frac{u^{'}_{32}}{2} & \frac{u^{'}_{33}}{2} & \frac{u^{'}_{34}}{2} & \frac{w^{'}_{31}}{2} & \frac{w^{'}_{32}}{2} & \frac{w^{'}_{33}}{2} & \frac{w^{'}_{34}}{2} \cr
011 & \frac{u^{'}_{41}}{2} & \frac{u^{'}_{42}}{2} & \frac{u^{'}_{43}}{2} & \frac{u^{'}_{44}}{2} & \frac{w^{'}_{41}}{2} & \frac{w^{'}_{42}}{2} & \frac{w^{'}_{43}}{2} & \frac{w^{'}_{44}}{2} \cr 
100 & \frac{w^{'}_{11}}{2} &  \frac{w^{'}_{12}}{2} &  \frac{w^{'}_{13}}{2} &  \frac{w^{'}_{14}}{2} & \frac{u^{'}_{11}}{2} &  \frac{u^{'}_{12}}{2} &  \frac{u^{'}_{13}}{2} &  \frac{u^{'}_{14}}{2} \cr
101 & \frac{w^{'}_{21}}{2} & \frac{w^{'}_{22}}{2} & \frac{w^{'}_{23}}{2} & \frac{w^{'}_{24}}{2} & \frac{u^{'}_{21}}{2} & \frac{u^{'}_{22}}{2} & \frac{u^{'}_{23}}{2} & \frac{u^{'}_{24}}{2} \cr
110 & \frac{w^{'}_{31}}{2} & \frac{w^{'}_{32}}{2} & \frac{w^{'}_{33}}{2} & \frac{w^{'}_{34}}{2} & \frac{u^{'}_{31}}{2} & \frac{u^{'}_{32}}{2} & \frac{u^{'}_{33}}{2} & \frac{u^{'}_{34}}{2}  \cr
111 &  \frac{w^{'}_{41}}{2} & \frac{w^{'}_{42}}{2} & \frac{w^{'}_{43}}{2} & \frac{w^{'}_{44}}{2} & \frac{u^{'}_{41}}{2} & \frac{u^{'}_{42}}{2} & \frac{u^{'}_{43}}{2} & \frac{u^{'}_{44}}{2} }.
\label{b22}
\end{align}

From Eq. (\ref{b22}), it can be seen that 
\be
P^{\mu}_{SvF} (0 b c|000) = P^{\mu}_{SvF} (1 b c|100), \nonumber
\ee
which is not true for the fully local Svetlichny family as given in Eq. (\ref{SvF})
with $0 < \mu \leq \frac{1}{\sqrt{2}}$.
Because, in case of fully local $P^{\mu}_{SvF}$ ($0 < \mu \leq \frac{1}{\sqrt{2}}$) given in Eq. (\ref{SvF}), 
\begin{equation}
P^{\mu}_{SvF} (0 b c|000)  = \frac{2+(-1)^{b \oplus c }\sqrt{2}\mu}{16} \nonumber
\end{equation}
and
\begin{equation}
P^{\mu}_{SvF} (1 b c|100) = \frac{2+(-1)^{b \oplus c  \oplus 1}\sqrt{2}\mu}{16}. \nonumber
\end{equation}

From Eq. (\ref{b22}), it can be seen that 
\be
P^{\mu}_{SvF} (1 b c|000) = P^{\mu}_{SvF} (0 b c|100), \nonumber
\ee
which is not true for the fully local Svetlichny family as given in Eq. (\ref{SvF}) 
for $0 < \mu \leq \frac{1}{\sqrt{2}}$. Because, in case of fully local $P^{\mu}_{SvF}$ 
($0 < \mu \leq \frac{1}{\sqrt{2}}$) as given in Eq. (\ref{SvF}), 
\begin{equation}
P^{\mu}_{SvF} (1 b c|000)  = \frac{2+(-1)^{b \oplus c   \oplus 1}\sqrt{2}\mu}{16} \nonumber
\end{equation}
and
\begin{equation}
P^{\mu}_{SvF} (0 b c|100) = \frac{2+(-1)^{b \oplus c}\sqrt{2}\mu}{16}. \nonumber
\end{equation}

Again, from Eq. (\ref{b22}), it can be seen that 
\be
P^{\mu}_{SvF} (0 b c|011) = P^{\mu}_{SvF} (1 b c|111), \nonumber
\ee
which is not true for fully local $P^{\mu}_{SvF}$ as given in 
Eq. (\ref{SvF}) for $0 < \mu \leq \frac{1}{\sqrt{2}}$. Because, in case of fully local 
$P^{\mu}_{SvF}$ ($0 < \mu \leq \frac{1}{\sqrt{2}}$) as given in Eq. (\ref{SvF}), 
\begin{equation}
P^{\mu}_{SvF} (0 b c|011)  = \frac{2+(-1)^{b \oplus c \oplus 1}\sqrt{2}\mu}{16} \nonumber
\end{equation}
and, 
\begin{equation}
P^{\mu}_{SvF} (1 b c|111) = \frac{2+(-1)^{b \oplus c}\sqrt{2}\mu}{16} \nonumber
\end{equation}

From Eq. (\ref{b22}), it can be seen that 
\be
P^{\mu}_{SvF} (1 b c|011) = P^{\mu}_{SvF} (0 b c|111),
\ee
which is not true for fully local $P^{\mu}_{SvF}$ as given in Eq. (\ref{SvF})
for $0 < \mu \leq \frac{1}{\sqrt{2}}$. Because, in case of fully local 
$P^{\mu}_{SvF}$ ($0 < \mu \leq \frac{1}{\sqrt{2}}$) as given in Eq. (\ref{SvF}), 
\begin{equation}
P^{\mu}_{SvF} (1 b c|011)  = \frac{2+(-1)^{b \oplus c}\sqrt{2}\mu}{16} \nonumber
\end{equation}
and
\begin{equation}
P^{\mu}_{SvF} (0 b c|111) = \frac{2+(-1)^{b \oplus c   \oplus 1}\sqrt{2}\mu}{16} \nonumber.
\end{equation}

Hence, in this case, though the marginal probabilities for Alice $P_{SvF}(a|x)$ are satisfied, 
all the tripartite joint probability distributions $P_{SvF}^{\mu}$ are not satisfied simultaneously.\\

Similarly, in case of the second choice, if we assume that $P_{0} ^{Sv}(a|x) = P_D^{11}$, 
$P_{1}^{Sv} (a|x) = P_D^{10}$, then the marginal probabilities for Alice $P_{SvF}(a|x)$ are satisfied, 
but all the tripartite joint probability distributions $P_{SvF}^{\mu}$ are not satisfied simultaneously.\\

Note that this proof is valid for any two bipartite correlations 
$P_{\lambda} ^{Sv}(b c|y z)$ ($\lambda = 0,1$) shared between Bob and Charlie at each $\lambda_{A-BC}$  
without any constraint on the correlations. Hence, it is obvious that this proof will also 
be valid when these two bipartite correlations $P_{\lambda} ^{Sv}(b c|y z)$ ($\lambda = 0,1$) 
at Bob-Charlie's side are NS boxes and satisfy the marginal probabilities 
$P_{SvF}(b|y)$, $P_{SvF}(c|z)$ at Bob and Charlie's side, respectively.\\

It is, therefore, impossible to reproduce the fully local box 
$P^{\mu}_{SvF}$ ($0 < \mu \leq \frac{1}{\sqrt{2}}$) in the scenario as in Fig. \ref{gsl} 
where Alice preshares the hidden variable $\lambda_{A-BC}$ of dimension $2$ with Bob-Charlie.

\section{$2$-local form across the bipartition $(A|BC)$ for the fully  local  Mermin family  $P^{\nu}_{MF}$  in the range $0 < \nu\le\frac{1}{2}$} \label{2lfm}

For $0 < \nu\le\frac{1}{2}$ the fully local noisy Mermin box can be
decomposed in a convex mixture of the 3-local deterministic boxes.
In this range, we consider the following decomposition for the 
Mermin family
in terms of the 3-local
deterministic boxes: 

\begin{align} 
P^{\nu}_{MF}&=\frac{1}{4} P_D^{00} \Bigg\{ \frac{2 \nu}{8} \Big( 
P^{0000}_D+P^{0010}_D+P^{0101}_D+P^{0111}_D \nonumber \\
& + P^{1000}_D+P^{1011}_D+P^{1101}_D+P^{1110}_D \Big) \nonumber \\
&+ \frac{1- 2 \nu}{4} \Big( P_D^{1000} + P_D^{1100} + P_D^{1101} + P_D^{1001} \Big) \Bigg\} \nonumber \\
& + \frac{1}{4} P_D^{01} \Bigg\{ \frac{2 \nu}{8} \Big( 
P^{0001}_D+P^{0011}_D+P^{0100}_D+P^{0110}_D \nonumber \\
& + P^{1001}_D+P^{1010}_D+P^{1100}_D+P^{1111}_D \Big) \nonumber \\
&+ \frac{1- 2 \nu}{4} \Big( P_D^{1000} + P_D^{1100} + P_D^{1101} + P_D^{1001} \Big) \Bigg\} \nonumber \\
& + \frac{1}{4} P_D^{01} \Bigg\{ \frac{2 \nu}{8} \Big( 
P^{0000}_D+P^{0011}_D+P^{0101}_D+P^{0110}_D \nonumber \\ 
& + P^{1100}_D+P^{1001}_D+P^{1110}_D+P^{1011}_D \Big) \nonumber \\
&+ \frac{1- 2 \nu}{4} \Big( P_D^{1000} + P_D^{1100} + P_D^{1101} + P_D^{1001} \Big) \Bigg\} \nonumber \\
& + \frac{1}{4} P_D^{01} \Bigg\{ \frac{2 \nu}{8} \Big( 
P^{0001}_D+P^{0010}_D+P^{0100}_D+P^{0111}_D \nonumber \\ 
& + P^{1000}_D+P^{1010}_D+P^{1101}_D+P^{1111}_D \Big) \nonumber \\
&+ \frac{1- 2 \nu}{4} \Big( P_D^{1000} + P_D^{1100} + P_D^{1101} + P_D^{1001} \Big) \Bigg\} \nonumber \\
& := \sum_{\lambda=0}^{3} r_{\lambda} P_{\lambda}^{M} (a|x) P_{\lambda} ^{M}(b c|y z), \label{mlhv}
\end{align}

where $r_0$ = $r_1$ = $r_2$ = $r_3$ = $\frac{1}{4}$; 
\begin{align}
&P_{0} ^{M}(a|x) = P_D^{00}, P_{1}^{M} (a|x) = P_D^{01}, \nonumber \\
&P_{2}^{M}(a|x) = P_D^{10}, P_{3}^{M}(a|x) = P_D^{11}; \nonumber
\end{align}
and\\

\begin{align}
 P_{0}^{M} (b c|y z) & =  \Bigg\{ \frac{2 \nu}{8} \Big( 
P^{0000}_D+P^{0010}_D+P^{0101}_D+P^{0111}_D \nonumber \\
& + P^{1000}_D+P^{1011}_D+P^{1101}_D+P^{1110}_D \Big) \nonumber \\
&+ \frac{1- 2 \nu}{4} \Big( P_D^{1000} + P_D^{1100} + P_D^{1101} + P_D^{1001} \Big) \Bigg\} \nonumber \\
& = \begin{pmatrix}
 \frac{1+\nu}{4} && \frac{1-\nu}{4} && \frac{1-\nu}{4} && \frac{1+\nu}{4} \\
 \frac{1+\nu}{4} && \frac{1-\nu}{4} && \frac{1-\nu}{4} && \frac{1+\nu}{4} \\
 \frac{1+\nu}{4} && \frac{1-\nu}{4} && \frac{1-\nu}{4} && \frac{1+\nu}{4} \\
 \frac{1-\nu}{4} && \frac{1+\nu}{4} && \frac{1+\nu}{4} && \frac{1-\nu}{4}  \nonumber\\
\end{pmatrix}, \\
P_{1} ^{M}(b c|y z)  &= \begin{pmatrix}
\frac{1-\nu}{4} && \frac{1+\nu}{4} && \frac{1+\nu}{4} && \frac{1-\nu}{4}\\
\frac{1-\nu}{4} && \frac{1+\nu}{4} && \frac{1+\nu}{4} && \frac{1-\nu}{4}\\
\frac{1-\nu}{4} && \frac{1+\nu}{4} && \frac{1+\nu}{4} && \frac{1-\nu}{4}\\
\frac{1+\nu}{4} && \frac{1-\nu}{4} && \frac{1-\nu}{4} && \frac{1+\nu}{4} \nonumber \\
\end{pmatrix}, \\
 P_{2} ^{M}(b c|y z)  &= \begin{pmatrix}
\frac{1-\nu}{4} && \frac{1+\nu}{4} && \frac{1+\nu}{4} && \frac{1-\nu}{4}\\
\frac{1+\nu}{4} && \frac{1-\nu}{4} && \frac{1-\nu}{4} && \frac{1+\nu}{4}\\
\frac{1+\nu}{4} && \frac{1-\nu}{4} && \frac{1-\nu}{4} && \frac{1+\nu}{4}\\
\frac{1+\nu}{4} && \frac{1-\nu}{4} && \frac{1-\nu}{4} && \frac{1+\nu}{4} \nonumber \\
\end{pmatrix}, \\
P_{3}^{M} (b c|y z) & = \begin{pmatrix}
\frac{1+\nu}{4} && \frac{1-\nu}{4} && \frac{1-\nu}{4} && \frac{1+\nu}{4}\\
\frac{1-\nu}{4} && \frac{1+\nu}{4} && \frac{1+\nu}{4} && \frac{1-\nu}{4}\\
\frac{1-\nu}{4} && \frac{1+\nu}{4} && \frac{1+\nu}{4} && \frac{1-\nu}{4}\\
\frac{1-\nu}{4} && \frac{1+\nu}{4} && \frac{1+\nu}{4} && \frac{1-\nu}{4} \nonumber \\
\end{pmatrix}.  
\end{align}

Note that, each of the probability distributions $P_{\lambda}^{M} (b c|y z)$ is local 
for  $0< \nu \leq \frac{1}{2}$, as they satisfy the complete set of 
Bell-CHSH inequalities \cite{CHS+69, WW01} in this range. In fact, each of 
the $P_{\lambda}^{M} (b c|y z)$ is superlocal in the range $0< \nu \leq \frac{1}{2}$. 
Because each  of the  $P_{\lambda}
^{M}(b   c|y  z)$ belongs to CHSH family \cite{GBS16},
\begin{equation}
P_{CHSH}(bc|yz)       =       \frac{2+(-1)^{b\oplus       c\oplus
    yz}\sqrt{2}V}{8}, \label{chshfam}
\end{equation}
with $0 < V=\sqrt{2}\nu \leq 1$,  upto local reversible operations.
In Ref. \cite{JAS16}, 
it has been shown that the above local box is superlocal for any $V>0$.

For the fully  local  noisy   Mermin-box  ($0< \nu \leq \frac{1}{2}$),  the decomposition (\ref{mlhv}) defines a classical simulation 
protocol with different deterministic distributions $P_{\lambda} ^{M}(b c|y z)$ at Alice's side,
where Alice shares hidden variable $\lambda_{A-BC}$ of dimension $4$ as in Fig. \ref{gsl}.  Decomposition (\ref{mlhv}) represents 2-local form across the bipartite cut $(A|BC)$ of the fully  local  noisy   Mermin-box  ($0< \nu \leq \frac{1}{2}$). 

\section{$2$-local form across the bipartition $(A|BC)$ for the Mermin family  $P^{\nu}_{MF}$  in the range $0 < \nu\le1$} \label{2lfm2}

For  $0 <  \nu\le 1$  the 2-local 
Mermin  family can  be decomposed  in a  convex mixture  of the  local
vertices and 2-local vertices as follows:
\begin{align}
P^{0<\nu<1}_{MF}&=\frac{1}{4}P^{00}_D(a|x) \left(\nu P_{PR}^{000}(bc|yz)+ \frac{1-\nu}{16} \sum_{\gamma, \epsilon, \zeta, \eta}  P^{\gamma\epsilon\zeta\eta}_D(bc|yz) \right) \nonumber \\
&+\frac{1}{4}P^{01}_D(a|x) \left(\nu P_{PR}^{001}(bc|yz)+\frac{1-\nu}{16} \sum_{\gamma, \epsilon, \zeta, \eta}  P^{\gamma\epsilon\zeta\eta}_D(bc|yz) \right) \nonumber \\
&+\frac{1}{4}P^{10}_D(a|x) \left(\nu P_{PR}^{111}(bc|yz)+\frac{1-\nu}{16} \sum_{\gamma, \epsilon, \zeta, \eta}  P^{\gamma\epsilon\zeta\eta}_D(bc|yz) \right)\nonumber \\
&+\frac{1}{4}P^{11}_D(a|x) \left(\nu P_{PR}^{110}(bc|yz)+\frac{1-\nu}{16} \sum_{\gamma, \epsilon, \zeta, \eta}  P^{\gamma\epsilon\zeta\eta}_D(bc|yz) \right) \nonumber \\
&: = \sum_{\lambda=0}^{3} r_{\lambda} P_{\lambda}^{M} (a|x) P_{\lambda} ^{M}(b c|y z), \label{mlhv3}
\end{align}
where $r_0$ = $r_1$ = $r_2$ = $r_3$ = $\frac{1}{4}$; 
\begin{align}
&P_{0} ^{M}(a|x) = P_D^{00}, P_{1}^{M} (a|x) = P_D^{01}, \nonumber \\
&P_{2}^{M}(a|x) = P_D^{10}, P_{3}^{M}(a|x) = P_D^{11}; \nonumber
\end{align}
and\\

\begin{align}
 P_{0}^{M} (b c|y z) & =  \nu P_{PR}^{000}(bc|yz)+\frac{1-\nu}{16} \sum_{\gamma, \epsilon, \zeta, \eta}  P^{\gamma\epsilon\zeta\eta}_D(bc|yz),  \nonumber \\
 P_{1}^{M} (b c|y z) & =  \nu P_{PR}^{001}(bc|yz)+ \frac{1-\nu}{16} \sum_{\gamma, \epsilon, \zeta, \eta}  P^{\gamma\epsilon\zeta\eta}_D(bc|yz),  \nonumber \\
 P_{2}^{M} (b c|y z) & = \nu P_{PR}^{111}(bc|yz)+ \frac{1-\nu}{16} \sum_{\gamma, \epsilon, \zeta, \eta}  P^{\gamma\epsilon\zeta\eta}_D(bc|yz),  \nonumber \\
 P_{3}^{M} (b c|y z) & =  \nu P_{PR}^{110}(bc|yz)+ \frac{1-\nu}{16} \sum_{\gamma, \epsilon, \zeta, \eta}  P^{\gamma\epsilon\zeta\eta}_D(bc|yz).  \nonumber
\end{align}

Note that, each of the probability distributions $P_{\lambda}^{M} (b c|y z)$ satisfy NS principle 
for  $0< \nu \leq 1$,  Hence, the noisy Mermin box has the 2-local form across the bipartite cut $(A|BC)$ in the range $0< \nu \leq 1$. For the   $2$-local  noisy   Mermin-box  ($0<  \nu  \leq
1$),  the decomposition (\ref{mlhv3}) defines a classical simulation 
protocol as in Fig. \ref{sl} where Alice shares hidden variable
$\lambda_{A-BC}$ of dimension $4$.

\end{document}